\newcommand{\F}{\mathscr F}
\def\e{{\varepsilon}}
\newcommand{\basePoint}{\mathcal{B}}
\newcommand{\Int}{\mathbb{Z}}
\newcommand{\G}{\mathbb{G}}
\newcommand{\adv}{ \mathscr{A}}
\newcommand{\V}{ \mathcal{V}}
\newcommand{\U}{ \mathcal{U}}
\newcommand{\E}{ \mathbb{E}}
\newcommand{\ab}{ \mathcal{A}}
\newcommand{\rb}{ \mathcal{R}}
\newcommand{\mb}{ \mathcal{M}}
\newcommand{\cb}{ \mathcal{C}}
\newcommand{\ub}{ \mathcal{U}}
\newcommand{\kb}{ \mathcal{K}}
\newcommand{\sh}{ y}
\newcommand{\shb}{ \mathcal{Y}}
\newcommand{\rtb}{ \mathcal{R}'}
\newcommand{\Pb}{ \mathbb{P}}
\newcommand{\pk}{\mathsf{pk}}
\newcommand{\sk}{\mathsf{sk}}
\newcommand{\rec}{\mathsf{rec}}
\DeclareMathOperator{\Com}{Com}
\DeclareMathOperator{\Ver}{Ver}
\theoremstyle{thmstyleone}
\newtheorem{thm}{Theorem}[section]
\newtheorem{lem}{Lemma}[section]
\theoremstyle{thmstylethree} 
\newtheorem{defn}{Definition}[section]
\newtheorem{oss}{Observation}
\begin{document}


\title{A Provably-Unforgeable Threshold EdDSA with an Offline Recovery Party}

\author*[1]{\fnm{Michele} \sur{Battagliola}}\email{battagliola.michele@gmail.com}
\equalcont{These authors contributed equally to this work.}

\author[1]{\fnm{Riccardo} \sur{Longo}}\email{riccardolongomath@gmail.com}
\equalcont{These authors contributed equally to this work.}
\author[1]{\fnm{Alessio} \sur{Meneghetti}}\email{alessio.meneghetti@unitn.it}
\equalcont{These authors contributed equally to this work.}
\author[1]{\fnm{Massimiliano} \sur{Sala}}\email{maxsalacodes@gmail.com}
\equalcont{These authors contributed equally to this work.}

\affil*[1]{\orgdiv{Department of Mathematics}, \orgname{University Of Trento}, \orgaddress{\street{Via Sommarive 14}, \city{Povo}, \postcode{38123}, \state{TN}, \country{Italy}}}


\abstract{   We present an EdDSA-compatible multi-party digital signature scheme that supports an offline participant during the key-generation phase, without relying on a trusted third party.
    \\
    Under standard assumptions we prove our scheme secure against adaptive malicious adversaries.
    Furthermore, we show how our security notion can be strengthen when considering a rushing adversary.
    Moreover we provide a possible solution to achieve the resiliency of the recovery in the presence of a malicious party.
    Using a classical game-based argument, we prove that if there is an adversary capable of forging the scheme with non-negligible probability, then we can build a forger for the centralized EdDSA scheme with non-negligible probability.}

\keywords{94A60 Cryptography, 12E20 Finite fields, 14H52 Elliptic curves, 94A62 Authentication and secret sharing, 68W40 Analysis of algorithms}



\maketitle

\section{Introduction}

A $(t,n)$-threshold signature scheme is a multi-party computation protocol that enables a subset of at least $t$ among $n$ authorized players to jointly perform digital signatures. 
The flexibility and security advantages of threshold protocols have become of central importance in the research for new cryptographic primitives \cite{nist_stand}. 
\\
Starting from the highly influential work of Gennaro et al~\cite{gennaro1996robust}, several authors proposed both novel schemes \cite{mackenzie2001two,lindell2018fast,canetti2020uc} and improvements to existing protocols \cite{ mackenzie2004two,gennaro2016threshold,boneh2017using,lindell2017fast,doerner2018secure,gennaro2018fast,doerner2019threshold,kondi2019refresh}. 

Usually threshold schemes translate in the multi-party setting a well-established signature scheme, while producing signatures that are compatible with this centralized version.
Then their security is proved with a reduction to the standard centralized scheme, like the proof presented in \cite{ECDSA2020threshold} and \cite{gennaro}.
The key-generation and signature algorithms are replaced by a communication protocol between the parties, while the compatibility is achieved by keeping the verification algorithm of the centralized algorithm.
This approach streamlines the insertion of the new protocol in the cryptographic landscape, because verification is compatible with established solutions and the resulting security derives from standard assumptions.

Recently, in \cite{ECDSA2020threshold} the authors propose a $(2,3)$-threshold multi-signature protocol compatible with ECDSA in which one of the users plays the role of recovery party: a user involved only once in a preliminary set-up prior even to the key-generation step of the protocol. 
More precisely, only two parties are active in the key-generation step of the protocol, and by a secure multi-party protocol they create their own private keys together with some additional data to be eventually sent to the third non-active player.
In case of need, the third player receives and uses this additional data to generate its private key and can therefore eventually participate in the signature phase with one of the other two.
Threshold multi-signatures with offline parties are applied for example in custody services for crypto-assets \cite{dinicola_2020}.

In this paper we propose an EdDSA-compatible variant of \cite{ECDSA2020threshold} where again the key-generation algorithm of the protocol does not require the active involvement of all three players.
\\
While EdDSA offers better performance than ECDSA, the latter is at first glance better suited for a multiparty environment: the presence of hash computations in EdDSA is indeed not readily-compatible with an MPC setting.
In order to work around the problem, we work with a variant of EdDSA whose ouputs are indistinguishable from those of the standard version, and we adopt some techniques similar to those in \cite{Purify} to  deal with the deterministic nature of the protocol.

We prove the protocol secure against adaptive adversaries by reducing it to the classical EdDSA scheme, assuming the security of a non-malleable commitment scheme, the strength of the underlying hash function and an IND-CPA encryption scheme.
Moreover we make some considerations about the resiliency of the recovery, an interesting aspect due to the presence of an offline party, analyzing possible changes that allow us to achieve this higher level of security.

\paragraph{Organization}
We present some preliminaries in \Cref{preliminaries}. We describe our protocol in \Cref{protocol}, in particular in \Cref{key-derivation} we provide a protocol extension that includes key-derivation.
In \Cref{security} we state and prove the security properties of our protocol.
Finally in \Cref{conclusions} we draw our conclusions.

\section{Preliminaries}\label{preliminaries}
In this section we present some preliminary definitions and primitives that will be used in the protocol and its proof of security.

\paragraph{Notation}
We use the symbol $\parallel$ to indicate the concatenation of bit-strings.
Sometimes we slightly abuse the notation and concatenate a bit-string $M$ with an elliptic curve point $\mathcal{P}$, in those cases we assume that there has been fixed an encoding $\varphi$ that maps elliptic curve points into bit-strings, so $M\parallel\mathcal{P} := M\parallel\varphi(\mathcal{P})$.

In the following when we say that an algorithm is \emph{efficient} we mean that it runs in (expected) polynomial time in the size of the input, possibly using a random source.

We use a blackboard-bold font to indicate algebraic structure (i.e. sets, groups, rings, fields and elliptic curves), and a calligraphic font to denote points over elliptic curves.
About elliptic curves, we distinguish the notation for the curve used for the signature and for the auxiliary curve used in the deterministic nonce generation: the latter are characterized by a prime symbol.

\subsection{Decisional Diffie-Hellman Assumption}
Our proof is based on the Decisional Diffie-Hellman~\cite{boneh1998decision} (from now on DDH).

\begin{defn}[DDH Assumption]\label{DDH}
Let $\G$ be an (additive) cyclic group with generator $\basePoint$ and order $q$.
Let $a, b, c$ be random elements of $\Int_q$.
The Decisional Diffie-Hellman (DDH) assumption, states that no efficient algorithm can distinguish between the two distributions $(\basePoint, a\basePoint, b\basePoint, ab\basePoint)$ and $(\basePoint, a\basePoint, b\basePoint, c\basePoint)$.
\end{defn}

\subsection{Cryptographic Hash Functions}\label{hash}
In the EdDSA scheme (and therefore in our threshold protocol) a cryptographic hash function $H$ is used as a \emph{Pseudo-Random Number Generator} (PRNG), employed to derive secret scalars and nonces.

For this reason we map the output of the hash function onto the ring $\Int_q$ where $q$ is a prime and the order of the base point $\basePoint$ used in EdDSA (i.e. $\basePoint$ generates a subgroup of elliptic curve points with prime order $q$), and we require $H$ to behave like a \emph{Random Oracle}.
We formalise our requirements with the following definition.
\begin{defn}[Good PRNG] \label{good-PRNG}
  Let $H: \{0,1\}^* \rightarrow \Int_q$ be a function that maps bit-strings of arbitrary length into elements of $\Int_q$.
  $H$ is a \emph{Good PRNG} if no efficient algorithm can distinguish between the distributions of $H(S)$ and $x$, where both ${x \in \Int_q}$ is chosen uniformly at random, and $S \in \{0,1\}^*$ is a bit-string that embeds at least $n$ bits of entropy, with $2^{n} < q < 2^{n+1}$.
\end{defn}
This definition is not standard, but precisely captures exactly what we need from a hash function to generate a good nonce.
Also note that the stronger classical definition of a Random Oracle, that is usually used to study the security of EdDSA, perfectly satisfies our definition.

For secret scalars, EdDSA uses the hash function in a slightly more complicated way, in order to to prevent timing leaks in poor implementations, put a lower bound on standard attacks, and embed the curve cofactor into the scalar, so that even a multiplication by an adversary-controlled point would not leak information about the secret (although note that this does not happen in the EdDSA scheme).
For this reason we introduce also the following definition that captures this additional security requirement.
\begin{defn}[Strong PRNG] \label{strong-PRNG}
  Let $H: \{0,1\}^b \rightarrow \{0,1\}^{n}$ be a function that maps bit-strings of length $b$ into bit-strings of length $c \le n \le b$, with $q < 2^{n+2-c}$, and $c\in\{2,3\}$.
  $H$ is a \emph{Strong PRNG} if no efficient algorithm can distinguish between the distributions of $\psi(H(\mathtt{k}))$ and $x$, where both $\mathtt{k}\in \{0,1\}^b, x \in \Int_q$ are chosen uniformly at random, and $\psi: \{0,1\}^{n}\rightarrow \Int_q$ is defined as:
  \begin{equation}\label{scalar-encoding}
      \psi(h) = 2^{n+1}+\sum_{i=c}^{n}2^i h_i \mod q.
  \end{equation}
\end{defn}
Again, this definition is not standard, but is a tight fit for what we need and a classical Random Oracle satisfies it.

\subsection{EdDSA} \label{EdDSA}
Edwards-curve Digital Signature Algorithm (EdDSA)~\cite{eddsa} is a digital signature scheme based on twisted Edwards curves. It is designed to be faster than the previously developed schemes without sacrificing security.

EdDSA has several parameters: a prime field $\mathbb{F}_p$; an integer $b$ with $2^{b-1} > p$; a $(b-1)$-bit encoding of elements of the finite field $\mathbb{F}_p$ (if omitted it is assumed to be the classical \emph{little-endian} encoding); a cryptographic hash function $H$ producing $2b$-bit outputs; an integer $c \in\{2, 3\}$ associated to the cofactor of the curve, an integer $n$ with $c \le n \le b$ (secret scalars are $n+1$ bits long); a non-zero square element $a\in\mathbb{F}_p$; a non-square element $d$ of  $\mathbb{F}_p$; a point $\basePoint \ne (0,1)$ of the curve described by the equation:
\begin{equation}
    a x^2+y^2 = 1+d x^2y^2,
\end{equation}
and a prime $q$ such that $q\basePoint=0$ and $2^c q$ is the number of points of the curve.
Elliptic curve points are encoded as $b$-bit strings that are the $(b-1)$-bit encoding of their second coordinate $y$, followed by a \emph{sign bit} that is set if the $(b-1)$-bit encoding of the first coordinate $x$ is lexicographically larger than the
$(b-1)$-bit encoding of $-x$.
When we concatenate a point and a bit-string (e.g. $\mathcal{P}\parallel S$) we implicitly encode the point into a bit-string as explained above.
\\
\\
Given the parameters $(p, b, H, c, n, a, d, \basePoint, q)$ described above, the protocol works as follows:
\begin{enumerate}
	\item Choose a random $b-bit$ string $\mathtt{k}$, that will be the secret key.
	\item Compute $H(\mathtt{k}) = (h_0,...,h_{2b-1})$.
	\item Compute $a=\psi(h_0\parallel\ldots\parallel h_{n-1})$ (where $\psi$ is the same as~\Cref{scalar-encoding}), the public key is set to be $\mathcal{A}=a\basePoint$.
	\item To sign a message $M$ compute $r=H(h_b\parallel...\parallel h_{2b-1}\parallel M)$ (interpreting the digest as an integer), and  ${\mathcal{R}=r\basePoint}$.
	\item The signature is $(\mathcal{R},S)$, where $S=(r+a H(\mathcal{R}\parallel\mathcal{A}\parallel M)) \mod l$.
	\item to verify the signature check if $2^c S\basePoint = 2^c \mathcal{R}+2^c H(\mathcal{R}\parallel\mathcal{A}\parallel M)\mathcal{A}$.
\end{enumerate}

\subsection{Encryption Scheme}\label{IND-CPA}
In our protocol we need an asymmetric encryption scheme to communicate with the offline party.  
The minimum requirement we ask for our protocol to be secure is that the encryption scheme chosen by the offline party has the property of IND-CPA \cite{bellare2005introduction,indcpa}, i.e.:

\begin{defn}
    Let $\Pi = (\mathsf{Gen}, \mathsf{Enc}, \mathsf{Dec})$ be a public key encryption scheme. Let us define the
    following experiment between an adversary $\adv$ and a challenger $\mathscr{C}^b$ parametrized by a bit $b$:
    \begin{enumerate}
        \item The challenger runs $\mathsf{Gen}(1^k)$ to get $\sk$ and $\pk$, the secret and public keys. Then it gives $\pk$ to $\adv$.
        \item $\adv$ outputs two messages $(m_0, m_1)$ of the same length.
        \item The challenger computes $\mathsf{Enc}(\pk, m_b)$ and gives it to $\adv$.
        \item $\adv$ outputs a bit $b'$(if it aborts without giving any output, we just set $b'=0$). The challenger returns $b'$ as the output of the game.
    \end{enumerate}
    We say that $\Pi$ is secure against a chosen plaintext attack if for any k and any probabilistic polynomial time adversary $\adv$ the function
    \begin{equation}
        \mathsf{Adv}(\adv) = \Pb[\mathscr{C}^1(\adv, k) = 1] - \Pb[\mathscr{C}^0(\adv, k)= 1],
    \end{equation}
    i.e. $\mathsf{Adv}(\adv) = \Pb[b'=b] - \Pb[b'\ne b]$, is negligible.
 \end{defn}
 
This hypothesis will be enough to prove the unforgeability of the protocol, but it is possible to achieve an higher notion of security using more sophisticated encryption scheme that supports ZKP for the Discrete Logarithm. This will be more clearly explained in \Cref{SottosezioneResilienzaRecovery}.

\subsection{Commitment Schemes}\label{commitments}
A commitment scheme~\cite{brassard1988minimum} is composed by two algorithms:
\begin{itemize}
 \item \textbf{$\Com(M): \{0,1\}^* \rightarrow \{0,1\}^* \times \{0,1\}^*$:} takes in input the value $M$ to commit\footnote{In the protocol and the simulations we implicitly encode every value we need to commit into a bit-string, assuming there is a standard encoding understood by all parties} and, using a random source, outputs the commitment string $C$ and the decommitment string $D$.
 \item \textbf{$\Ver(C, D):  \{0,1\}^* \times \{0,1\}^* \rightarrow\{0,1\}^*\cup \{\perp\}$:} takes the commitment and decommitment strings $C, D$ and outputs the originally committed value $M$ if the input pair is valid, $\perp$ otherwise\footnote{Again, in the protocol we implicitly decode valid decommitment outputs (i.e. $\ne \perp$) into the original value, assuming that the decoding is also standard and understood by all parties}.
\end{itemize}
We require a commitment scheme to have the following properties:
\begin{itemize}
    \item \textbf{Correctness:} for every value $M$ it holds $\Ver(\Com(M)) = M$.
    \item \textbf{Binding:} for every commitment string $C$ it is infeasible to find ${M\ne M'}$ and $D\ne D'$ such that $\Ver(C,D) = M$ and $\Ver(C, D') = M'$ with both $M, M\ne \perp$. 
    \item \textbf{Hiding:} Let $(C, D) = \Com(M_b)$ with $b\in \{0, 1\}$, $M_1 \ne M_2$, then it is infeasible for an attacker that may choose $M_0 \ne M_1$ and sees only $C$, to correctly guess $b$ with more than negligible advantage.
    \item \textbf{Non Malleability:} Given $C = \Com(M)$, it is infeasible  for an adversary $\adv$ to produce another commitment string $C'$ such that after seeing $D$ such that $\Ver(C, D) = M$, $\adv$ can find a decommit string $D'$ such that $\Ver(C', D') = M'$ with $M'$ \emph{related} to $M$, that is $\adv$ can only create commitments to values that are independent from $M$.
\end{itemize}

\subsection{Zero-Knowledge Proofs}\label{zkp}
In the protocol various Zero-Knowledge Proofs (ZKP)~\cite{goldreich1986proofs} are used to enforce the respect of the passages prescribed by the specifications.
In fact in the proof of security we can exploit the soundness of these sub-protocols to extract valuable information from the adversary, and their zero-knowledge property to simulate correct executions even without knowing some secrets.
We can do so because we see the adversary as a (black-box) algorithm that we can call on arbitrary input, and crucially we have the faculty of rewind its execution.

In particular we use ZKP \emph{of Knowledge} (ZKPoK) to guarantee the usage of secret values that properly correspond to the public counterpart, specifically the Schnorr protocol for discrete logarithms, and its variant that proves that two public values are linked to the same secret (see~\cite{schnorr1989efficient,shoup2007sigma} and~\Cref{Schnorr}).
The soundness property of a ZKPoK guarantees that the adversary must know the secret input, and opportune rewinds and manipulations of the adversary's execution during the proof allows us to extract those secrets and use them in the simulation.
Conversely exploiting the zero-knowledge property we can trick the adversary in believing that we know our secrets even if we don't, thus we still obtain a correct simulation of our protocol form the adversary's point of view.

However Schnorr's protocol requires a prime order group, so we implicitly use the Ristretto technique~\cite{ristretto} for constructing prime order elliptic curve groups, and we transform elliptic curve points in Ristretto points for these computations.
This method extends Mike Hamburg's Decaf~\cite{hamburg2015decaf} approach to cofactor elimination to support cofactor-8 curves such as Curve25519~\cite{josefsson2017edwards, bernstein2006curve25519} (the standard EdDSA curve).
We refer to the original sources for more details about this approach.

\subsection{Feldman-VSS}\label{feldman}
Feldman's VSS scheme\cite{paperFeldman} is a verifiable secret sharing scheme built on top of Shamir's scheme\cite{shamirSS}. A secret sharing scheme is verifiable if auxiliary information is included, that allows players to verify the consistency of their shares.
We use a simplified version of Feldman's protocol: if the verification fails the protocol does not attempt to recover excluding malicious participants, instead it aborts altogether.
In a sense we consider \emph{somewhat honest} participants, for this reason we do not need stronger schemes such as~\cite{gennaro1999secure,schoenmakers1999simple}.
\\The scheme works as follows:
\begin{enumerate}
 \item A cyclic group $\G$ of prime order $q$ is chosen, as well as a generator $\basePoint \in \G$. 
 The group $\G$ must be chosen such that the discrete logarithm is hard to compute.
 \item The dealer computes a random polynomial $P$ of degree $t$ with coefficients in $\Int_q$, such that $P(0)=s$ where $s\in\Int_q$ is the secret to be shared.
 \item Each of the $n$ share holders receive a value $P(i)\in\Int_q$.
 So far, this is exactly Shamir's scheme. 
 \item To make these shares verifiable, the dealer distributes commitments to the coefficients of $P$. Let $P(X)=s+\sum_{i=1}^n a_i X^i$, then the commitments are $\cb_0=s\basePoint$ and $\cb_i={a_i}\basePoint$ for $i\in\{1,\ldots,n\}$.
 \item Any party can verify its share in the following way: let $\alpha$ be the share received by the $i$-th party, then it can check if $\alpha=P(i)$ by verifying if the following equality holds:
 $$\alpha\basePoint = \sum_{j=0}^t {(i^j)}\cb_j = s\basePoint +\sum_{j=1}^t {a_j(i^j)}\basePoint = \left(s + \sum_{j=1}^t a_j(i^j)\right)\basePoint={P(i)}\basePoint.$$
\end{enumerate}
In the proof we will need to simulate a $(2,2)$-threshold instance of this protocol without knowing the secret value $s$.

Let us use an additive group with generator $\basePoint$, and let $\shb = s \basePoint$, the simulation proceeds as follows:
\begin{itemize}
 \item the dealer selects two random values $a,b$ and forces $P(1) = a$, $P(2) = b$;
 \item\label{ECVSS} then sets $\cb_0=\shb$ and, depending on whether the adversary is $P_1$ or $P_2$, it computes:
 	\begin{align}
	 \cb_1&=a\basePoint-\shb;
	\end{align}
	in the case the adversary is $P_1$, or
	\begin{align}
	 \cb_1&=\frac{1}{2}(b\basePoint-\shb);
	\end{align}
	in the case the adversary is $P_2$.
	\item In either case the other player can successfully verify their shards, performing the corresponding check:
	\begin{align}
	 a\basePoint &= \shb + \cb_1 = \shb +a\basePoint-\shb
	 \end{align}
	 or 
	 \begin{align}
	 b\basePoint &= \shb + 2 \cb_1 = \shb+2\cdot\frac{1}{2}(b\basePoint-\shb).
	\end{align}
\end{itemize}

\subsection{Deterministic nonce generation}\label{ZKPpurify}

One of the peculiar features of EdDSA is that it is a deterministic signature algorithm, in the sense that it does not require the generation of a random nonce.

To achieve the same feature we rely on a verifiable random nonce generator: roughly each player chooses a random seed during the Key Generation algorithm and each time a signature is produced it is proven that the nonce used in the signature algorithm is coherent with the seed.

In particular we use Purify \cite{Purify}, a Pseudo-Random Function (PRF) purely based on elliptic curves. 

Let $\E$ be an Edward curve for the EdDSA algorithm, with a point $\basePoint$ of order $q$, Purify requires the choice of a second elliptic curve $\E'$ over $\mathbb{F}_{q^2}$ whose group of points is cyclic of order $q'$ is generated by a point $\basePoint'$ and is such that the DDH assumption holds.
In particular the participants fix a quadratic non residue $\delta \in \mathbb{F}_q^*$ and find $a,b \in \mathbb{F}_q$ such that
\begin{itemize}
    \item the equation $y^2 = x^3 + ax + b$ defines an elliptic curve ${\E}_1$ over $\mathbb{F}_q$ of a prime order $q_1$ in which the DDH assumption holds;
    \item the equation $y^2 = x^3 + a\delta^2x + b\delta^3$ defines an elliptic curve ${\E}_1$ over $\mathbb{F}_q$ of a prime order $q_2\ne q_1$ in which the DDH assumption holds;
\end{itemize}
Then define $\E'$ as the elliptic curve defined by the equation $y^2 = x^3 + ax + b$ over $\mathbb{F}_{q^2}$.
It is possible to prove that there is an efficiently computable and invertible isomorphism $\phi : \E' \to {\E}_1 \times {\E}_2$.

Let $z \in \{0,1\}^*$ be a string, we define the hash function
\begin{equation}\label{EqPur}
    H_{\mathtt{Pur}}(z) = \phi^{-1}(H_1(z),H_2(z))
\end{equation}
where $H_1$ and $H_2$ are hash functions onto $\E_1$ and $\E_2$ respectively.

Now let $f:\E' \to \Int_q$ be the function defined as follows:
\begin{equation}\label{fDef}
    f(\mathcal{Q}) =
    \begin{cases}
        0 & \text{if }\mathcal{Q}=0_{\E'}\\
        x_0 & \text{if }\mathcal{Q}=(x_0+x_1\sqrt{\delta},y_0+y_1\sqrt{\delta})\\
    \end{cases}
\end{equation}
 It is possible to prove that the uniform distribution over $\Int_q$ is statistically close to $f(U_{\E'} )$, where $U_{\E'}$ is the uniform distribution over $\E'$.
 So, if $z$ is a random uniformly distributed string, $u$ is distributed uniformly in $\mathbb{Z}_{q'}$, and $H_1, H_2$ behave like random oracles, we have that $f(u H_\mathtt{Pur}(z))$ is uniformly distributed in $\Int_q$.

The crucial aspect of this construction is that it allows the possibility of building a non-interactive ZKP for the relation $\U'=u'\basePoint'$ and $\rb=f(u\V')\basePoint$ where $\basePoint, \basePoint'$ are defined as before, $\V'$ is a public random point of $\E'$ and $u'$ is the private input of the prover.
This allows the construction of a verifiable pseudo-random nonce generator.
In particular the ZKP is described in \cite{Purify} and makes use of the Bulletproof framework \cite{Bulletproof}.

Formally the security of Purify is stated in the following Lemma:

\begin{lem}\label{LemmaPurify}
Let $\E'$ be an elliptic curve over $\mathbb{F}_{q^2}$ whose group of points generated by a point $\basePoint'$ is cyclic of order $q'$ and is such that the DDH assumption holds.
Let $u$ be a random element of $\Int_{q'}$ and $H_{\mathsf{Pur}}$ and $f$ be defined as respectively in \Cref{EqPur} and \Cref{fDef}.
Then $H_{\mathsf{Pur}}$  is indistinguishable from a random oracle onto $\mathbb{E}'$ and thus also $f(u H_{\mathsf{Pur}}(\cdot))$ is indistinguishable from a random oracle onto $\Int_q$.

Moreover it is possible to build a secure non-interactive ZKP for the relation relation $\U'=u'\basePoint'$ and $\rb=f(u\V')\basePoint$ where $\basePoint, \basePoint'$ are public data defined as before, $\V'$ is a public random point of $\E'$ and $u'$ is the private input of the prover.
\end{lem}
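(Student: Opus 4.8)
The plan is to prove the two assertions by different means: the indistinguishability claim is a short hybrid argument resting on the isomorphism $\phi$ together with \Cref{DDH}, whereas the existence of the ZKP is an arithmetization argument that I would import from \cite{Purify}. Throughout I would reuse the two facts already recorded after \Cref{fDef}, namely that $\phi:\E'\to\E_1\times\E_2$ is an efficiently computable and invertible group isomorphism, and that $f(U_{\E'})$ is statistically close to the uniform distribution on $\Int_q$.

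For the first assertion I would start with $H_{\mathsf{Pur}}$ itself. Since $\phi$ is a group isomorphism, it carries the uniform distribution on $\E'$ to the uniform distribution on $\E_1\times\E_2$ and back. As $H_1,H_2$ are independent random oracles, for each fresh input $z$ the pair $(H_1(z),H_2(z))$ is uniform on $\E_1\times\E_2$ and independent of all earlier answers, so $H_{\mathsf{Pur}}(z)=\phi^{-1}(H_1(z),H_2(z))$ is distributed exactly as a random oracle onto $\E'$ would be; this step is pure bookkeeping once $\phi^{-1}$ is available. I would then pass to the keyed family. Writing $\mathcal P_i:=H_{\mathsf{Pur}}(z_i)$ for the distinct queried inputs, the previous step lets me treat the $\mathcal P_i$ as independent uniform points of $\E'$, all of which the distinguisher can recompute on its own by querying $H_1,H_2$. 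The only correlation among the challenge values comes through the shared secret $u$, and concealing it is exactly what \Cref{DDH} provides: by the standard random self-reducibility of DDH, the tuple $(\mathcal P_i,u\mathcal P_i)_i$ is computationally indistinguishable from $(\mathcal P_i,\mathcal Q_i)_i$ with each $\mathcal Q_i$ an independent uniform point of $\E'$. Applying $f$ coordinatewise and invoking the statistical closeness of $f(U_{\E'})$ to the uniform law on $\Int_q$, I conclude that the outputs $f(u\mathcal P_i)$ are jointly indistinguishable from independent uniform elements of $\Int_q$, i.e. that $f(uH_{\mathsf{Pur}}(\cdot))$ is indistinguishable from a random oracle onto $\Int_q$.

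For the second assertion I would follow the Bulletproof route of \cite{Bulletproof,Purify}. The relation couples a plain discrete-logarithm statement $\U'=u'\basePoint'$ over $\E'$ with the mixed statement $\rb=f(u'\V')\basePoint$, whose right-hand side is \emph{not} a homomorphism in the witness $u'$: it performs a scalar multiplication $u'\V'$ on $\E'$, extracts the coordinate $x_0$ through $f$, and only then multiplies the base point $\basePoint$ of the signature curve $\E$. The point is that every one of these operations is algebraic over $\Fq$: via $\phi$ the scalar multiplication on $\E'$ splits into two scalar multiplications on $\E_1,\E_2$ over $\Fq$; writing $\FQ=\Fq[\sqrt{\delta}]$ turns the evaluation of $f$ into ``read off the $\sqrt{\delta}$-free part''; and the closing multiplication is an ordinary $\E$-operation over $\Fq$. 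Hence the whole computation can be written as an arithmetic-circuit satisfiability instance over $\Fq$ with $u'$ as witness, fed to the Bulletproof proof system (completeness, special soundness, honest-verifier zero-knowledge) and made non-interactive by the Fiat--Shamir transform in the random-oracle model; special soundness then yields a knowledge extractor returning a $u'$ consistent with both $\U'$ and $\rb$.

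I expect the main obstacle to lie entirely in this last arithmetization: compactly encoding the elliptic-curve scalar multiplication together with the twisting isomorphism $\phi$ as a low-depth constraint system over $\Fq$, and then bounding the soundness error of the Fiat--Shamir-compiled Bulletproof, is where the real work is. By comparison the indistinguishability claim is a two-line hybrid once the isomorphism and the DDH self-reducibility are in place, and I would not expect any surprises there.
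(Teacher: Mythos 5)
The paper never proves this lemma itself --- it is imported wholesale from \cite{Purify} --- so your attempt must be measured against the argument in that source. Your first half reconstructs it correctly: composing the two random oracles $H_1,H_2$ with the bijection $\phi^{-1}$ gives exactly a random oracle onto $\E'$, and the keyed step is the standard ``multiplication by a secret scalar is a weak PRF under DDH'' argument, which applies here precisely because the points $u$ is paired with are random-oracle outputs; the final passage through $f$ is covered by the stated statistical closeness of $f(U_{\E'})$ to the uniform law on $\Int_q$. One caveat deserves a sentence in a full write-up: $\E'\cong\E_1\times\E_2$ has \emph{composite} order $q'=q_1q_2$, so the ``standard'' random self-reducibility of DDH needs the re-randomization matrix to be invertible modulo $q'$, which fails only with probability about $1/q_1+1/q_2$; this is negligible, but it is not the prime-order textbook statement.

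The genuine gap is in the second half, in the claim that ``the closing multiplication is an ordinary $\E$-operation over $\Fq$'' and can therefore be folded into the arithmetic circuit. It cannot: $\E$ is defined over $\mathbb{F}_p$, not over $\Fq$ (for Ed25519, $p=2^{255}-19$ while $q\approx 2^{252}$ is the order of $\basePoint$), so the point operations inside $r\mapsto r\basePoint$ are $\mathbb{F}_p$-arithmetic and are not natively expressible in a constraint system over $\Fq$; emulating foreign-field arithmetic inside the circuit would be prohibitive and is not what \cite{Purify} does. This field mismatch is exactly why the curves are engineered as they are: only the \emph{inner} computation --- the scalar multiplication $u'\V'$ on $\E'$ over $\mathbb{F}_{q^2}=\Fq[\sqrt{\delta}]$, the coordinate extraction $f$, and the check $\U'=u'\basePoint'$ --- is arithmetized over $\Fq$. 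The outer relation $\rb=r\basePoint$ is never placed in the circuit at all: the Bulletproof is instantiated in the order-$q$ group $\langle\basePoint\rangle\subset\E$, so that $\rb$ is literally a Pedersen commitment (with zero blinding) to the circuit's output wire $r$, and the link between the wire and the public point is enforced by the proof system's commitment-opening mechanism rather than by constraints. Without this idea the NIZKP you describe cannot be built as stated; with it, the rest of your plan (Bulletproof plus Fiat--Shamir, with extraction via witness-extended emulation rather than strict special soundness) goes through as in \cite{Purify,Bulletproof}.
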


\begin{oss}
As it will be clear later, in our construction the signature is deterministic as long as the set of signers  is fixed.
    To achieve a deterministic signature that depends only on the message, an alternative solution is the usage of a multi-party symmetric cipher with authenticated MAC key such as MiMC\cite{MiMC} and the Marvellous\cite{Marvellous} family combined with a threshold secret sharing of the key.
    However, while being suited for multi-party nonce generation, these protocols have the drawback of requiring expensive precomputation steps, that are cumbersome in our settings.
\end{oss}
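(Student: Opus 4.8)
This statement is an \emph{observation} rather than a formal claim, so by a ``proof'' I mean a protocol-level justification of the determinism assertion together with a qualitative discussion of the proposed alternative; neither part requires a game-based reduction. The plan is to argue three things in turn: (i) with the set of signers held fixed, the produced signature is a deterministic function of the message $M$; (ii) the signature genuinely depends on the signer set, which is precisely why it cannot be made a function of $M$ alone within this construction; and (iii) the MiMC/Marvellous alternative removes this dependence, at the cost of expensive preprocessing.

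For (i) I would unwind the signing procedure. Each party fixes its nonce seed once, during Key Generation, and never refreshes it; in the signature phase the input $z$ fed to $H_{\mathsf{Pur}}$ is obtained deterministically from $M$ and from the fixed public data identifying the active signers. By \Cref{LemmaPurify} the map $z\mapsto f(u\,H_{\mathsf{Pur}}(z))\basePoint$ is a fixed function once the seed $u$ has been chosen --- the randomness in the lemma is over the \emph{choice} of the seed, not over its \emph{evaluation} --- so each active party's nonce contribution is a deterministic function of $M$. The aggregate nonce $\mathcal{R}$ is a fixed linear combination (with the corresponding Lagrange coefficients) of these contributions, hence deterministic, and then $S=(r+a\,H(\mathcal{R}\parallel\mathcal{A}\parallel M))\bmod q$ is deterministic as well. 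Thus $(\mathcal{R},S)$ is uniquely determined by $M$ once the signer set is frozen.

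For (ii), the key point is that distinct subsets of signers feed distinct seeds into the aggregate nonce, so changing the signer set changes $\mathcal{R}$ and therefore the signature; this is exactly why message-only determinism is unattainable here. For (iii) I would sketch the alternative: replace the independent per-party seeds by a single PRF key shared through a threshold secret-sharing scheme, and compute the nonce as an MPC evaluation of an arithmetisation-friendly cipher (MiMC or a Marvellous-family primitive) on input $M$; since the key is the same regardless of \emph{who} reconstructs it, any qualified subset obtains the same nonce, giving determinism in $M$ alone. The drawback to justify is that secure MPC evaluation of such ciphers consumes correlated randomness (multiplication triples) whose amount scales with the multiplicative size of the circuit, and this preprocessing is cumbersome in our offline-recovery setting. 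The main obstacle is conceptual rather than technical: because the statement is an observation, there is no reduction to carry out, the determinism argument can only be made fully rigorous once the complete protocol is on the table (\Cref{protocol}), and the comparison in (iii) is an engineering trade-off that is qualitative by nature and not the object of a formal proof.
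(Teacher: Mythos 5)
Your reading of the statement is right: it is an unproved remark in the paper, and your protocol-level unwinding matches what the construction actually does --- $\mathtt{K}=H(X_A,X_B)$, $\V'=H_{\mathtt{Pur}}(\mathtt{K},M)$, $r_i=f(r'_i\V')$, $\rb=\rb_A+\rb_B$, $S=S_A+S_B$ are all deterministic once the seeds and the signer set are frozen, and the dependence of $\mathtt{K}$ on $(X_A,X_B)$ is exactly why determinism is per-signer-set rather than per-message (the paper even notes, in the observation following the security proof, that this dependence is security-critical, not incidental). Your account of the MiMC/Marvellous alternative and its preprocessing cost also matches the paper's intent.

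Two points in part (i) are inaccurate, and one of them is a genuine small gap. First, the aggregate nonce carries no Lagrange coefficients: the protocol computes the plain sum $\rb=\rb_A+\rb_B$, and the interpolation coefficients are folded entirely into the secret shares $\omega_i$ (e.g.\ $\omega_1=2x_1$, $\omega_2=-x_2$, $\tilde\omega_1=\tfrac34\omega_1$); this does not affect determinism, but your description of the aggregation is wrong as stated. Second, and more substantively, your premise that ``each party fixes its nonce seed once, during Key Generation'' fails for $P_3$: in the recovery signature as literally specified, $P_3$ picks $r'_3\in\Int_{q'}$ at random \emph{during} the recovery protocol, so with the signer set $\{P_1,P_3\}$ held fixed, two recovery signatures on the same $M$ would use different seeds and produce different $(\mathcal{R},S)$ --- determinism would fail for exactly the case the observation is meant to cover. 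The paper patches this in the observation following \Cref{recoverysignature}: $\rtb_3$ is generated once (possibly ahead of time, at setup) and $X_3$ is then fixed for all subsequent executions. Your argument is complete only once you import that convention explicitly; without it, part (i) holds for the ordinary signature but not for the recovery one.
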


\section{Protocol Description}\label{protocol}
In this section we describe the details of our protocol.
After some common parameters are established, one player chooses a long-term asymmetric key and then can go offline, leaving the proper generation of the signing key to the remaining two participants. 
For this reason the signature algorithm is presented in two variants, one used jointly by the two players who performed the Key Generation, and one used by the offline player and one of the others.
\\More specifically the protocol is comprised by four phases:
\begin{enumerate}
 \item \textbf{Setup Phase} (\Cref{setup}): played by all the parties, it is used to decide common parameters. Note that in many contexts these parameters are mandated by the application, so the parties merely acknowledge them, possibly checking they respect the required security level.
 \item \textbf{Key Generation} (\Cref{key-gen}): played by only two parties, from now on $P_1$ and $P_2$. It is used to create a public key and the private shards for each player.
 \item \textbf{Ordinary Signature} (\Cref{ordinarysignature}): played by $P_1$ and $P_2$. As the name suggests this is the normal use-case of the protocol.
 \item \textbf{Recovery Signature} (\Cref{recoverysignature}): played by $P_3$ and one between $P_1$ and $P_2$. This models the unavailability of one player, with $P_3$ stepping up as a replacement.
\end{enumerate}

In order to obtain a deterministic signature scheme we need to rely on a verifiable nonces generation algorithm. We choose to use Purify, described in \cite{Purify}. Starting from some secrete parameters and their commited value, this algorithm allow every party to check whether the other party has computed the correct random value or not. 

From here on with the notation ``$P_i$ does something'', we mean that both $P_1$ and $P_2$ perform the prescribed task independently.
Similarly, the notation ``$P_i$ sends something to $P_j$'' means that $P_1$ sends to $P_2$ and $P_2$ sends to $P_1$.

\subsection{Setup Phase}\label{setup}
 This phase involves all the participants and is used to decide the parameters of the algorithm.
 \\The parameters involved are the following:
 \\
 \\
 \begin{tabular}{|ll|}
 	\hline
 	\textbf{Player 1 and 2} & \\ \hline
 	\textbf{Input:} & $-$ \\
 	\textbf{Private Output:} & $-$\\
 	\textbf{Public Output:} & $\E, \basePoint, q, H$\\
 	&$\E', \basePoint', q',$\\
 	\hline
 \end{tabular}
 \begin{tabular}{|ll|}
 	\hline
 	\textbf{Player 3} & \\ \hline
 	\textbf{Input:} & $-$ \\
 	\textbf{Private Output:} & $\sk_3$\\
 	\textbf{Public Output:} & $\pk_3$\\
 	&\\
 	\hline
 \end{tabular}
 \\
 \\
 $P_3$ chooses an asymmetric encryption algorithm and a key pair $(\pk_3, \sk_3)$, then it publishes $\pk_3$, keeping $\sk_3$ secret. $\pk_3$ is the key that $P_1$ and $P_2$ will use to communicate with $P_3$. 
 The algorithm which generates the key pair ($\sk_3$, $\pk_3$) and the encryption algorithm itself are unrelated to the signature algorithm, but it is important that both of them are secure.
 We require the encryption protocol to be IND-CPA, see~\Cref{IND-CPA} and~\Cref{SottosezioneResilienzaRecovery} for more details.
 
 Then $P_1$ and $P_3$ need to agree on a secure hash function $H$ whose outputs we interpret as elements of $\Int_q$, a twisted Edwards elliptic curve $\E$ with cofactor $2^c$, and a generator $\basePoint \in \E$ of a subgroup of points of prime order $q$. The order identifies the ring $\Int_q$ used for scalar values. Lastly they need to agree on the Purify parameters, in particular they choose a second elliptic curve $\E'$ over $\mathbb{F}_{q^2}$ and a base point $\basePoint'\in \E'$ which generates a group of points of order $q'$.
  
\subsection{Key Generation}\label{key-gen}
The parameters involved are:
\begin{center}
    \begin{tabular}{|ll|}
    	\hline
    	\textbf{Player 1} & \\ \hline
    	\textbf{Input:} & $\pk_3$  \\
     	\textbf{Private Output:} & $\omega_1, r'_1$\\
     	\textbf{Shared Secret:} 
     	& $\mathcal{D}$\\
    	\textbf{Public Output:} & $\rec_{1,3}$, $\rec_{2,3}$,\\
    	&$\ab$, $X_1,X_2$\\
    	\hline
    \end{tabular}
    \begin{tabular}{|ll|}
    	\hline
    	\textbf{Player 2} & \\ \hline
    	\textbf{Input:} & $\pk_3$ \\
     	\textbf{Private Output:} & $\omega_2, r'_2$\\
     	\textbf{Shared Secret:} 
     	& $\mathcal{D}$\\
    	\textbf{Public Output:} & $\rec_{1,3}$, $\rec_{2,3}$,\\
    	&$\ab$, $X_1,X_2$\\
    	\hline
    \end{tabular}
\end{center}
The protocol proceeds as follows:
\begin{enumerate}
	\item Secret key generation and communication:
	\begin{enumerate}[label=\alph*.]
	\item $P_i$ picks randomly $a_i, \sh_{3,i}, m_i \in \Int_q$, $r'_i\in\Int_{q'}$, and sets $\ab_i = a_i\basePoint$, ${\shb_{3,i} = \sh_{3,i}\basePoint}$, $\rtb_i = r'_i\basePoint'$, $\mb_i = m_i\basePoint$.
	\item $P_i$ computes $[\mathsf{KGC}_i, \mathsf{KGD}_i] = \Com((\ab_i, \shb_{3,i}, \rtb_i, \mb_i))$.
	\item $P_i$ sends $\mathsf{KGC}_i$ to $P_j$.
	\item $P_i$ sends $\mathsf{KGD}_i$ to $P_j$.
	\item $P_i$ gets $(\ab_j, \shb_{3,j}, \rtb_j, \mb_j) = \Ver(\mathsf{KGC}_j, \mathsf{KGD}_j)$, and saves the pairs $X_1=(\ab_1, \rtb_1)$ and $X_2=(\ab_2, \rtb_2)$.
	\end{enumerate}
	\item \label{FeldmanKeyGeneration} Feldman VSS and generation of $P_3$'s data:
	\begin{enumerate}[label=\alph*.]
	\item $P_i$ sets $f_i(x) = a_i + m_i x$ and computes $\sh_{i,j} = f_i(j)$ for $j\in\{1,2,3\}$.
	\item $P_i$ encrypts $\sh_{i,3}, \sh_{3,i}$ with $\pk_3$, let  $\rec_{i,3}$ be the pair of ciphertexts obtained.
	\item $P_i$ sends $\sh_{i,j}, \rec_{i,3}$ to $P_j$.
	\item If the asymmetric encryption algorithm supports DLOG verification, the encryption $\rec_{i, 3}$ is accompanied by two NIZKPs: the first one proves that the first ciphertext in $\rec_{i,3}$ is the encryption of the DLOG of $\shb_{i,3} = \ab_i + 3\mb_i$, the second NIZKP proves that the second ciphertext is the encryption of the DLOG of $\shb_{3,i}$.
	$P_i$ checks the NIZKPs attached to $\rec_{j, 3}$.
	\item $P_i$ checks, as in the Feldman-VSS described in \Cref{feldman}, the integrity and consistency of the shards $\sh_{j,i}$, verifying that ${\shb_{j,i} = \ab_j + i\mb_j}$, where $\shb_{j,i} = \sh_{j,i}\basePoint$.
	\item $P_i$ computes $x_i = \sh_{1,i} + \sh_{2,i} + \sh_{3,i}$ \label{ZKPDiscreteLogarithmRec}.
	\end{enumerate}
	\item $P_i$ proves in ZK the knowledge of $x_i$ using Schnorr's protocol of \Cref{Schnorr}.
	\item Public key and shards generation:
	\begin{enumerate}[label=\alph*.] 
	\item the public key is $\ab =\sum_{i=1}^3 \ab_i$, where $\ab_3=2\shb_{3,1} - \shb_{3,2}$, so that ${a_3 = 2\sh_{3,1} - \sh_{3,2}}$.
	From now on we will set $a = \sum_{i=1}^3 a_i$, obviously $a\basePoint = \ab$.
	\item $P_1$ computes $\omega_1 = 2x_1$, while $P_2$ computes ${\omega_2 = -x_2}$.
	\item $P_i$ computes the common secret $\mathcal{D}=\sh_{3,i}\shb_{3,j}$.
	\end{enumerate}
\end{enumerate}
\begin{oss}\label{ossu3}
  We define $a_3 = 2\sh_{3,1} - \sh_{3,2}$ because we need to be consistent with the Feldman-VSS protocol.
  Indeed, suppose that $\sh_{3,2}$ and $\sh_{3,1}$ are valid shards of a Feldman-VSS protocol where the secret is $a_3$.
  In this way we have that ${\sh_{3,2} = a_3+2m_3}$ and $\sh_{3,1} = a_3+m_3$, so: $$2\sh_{3,1}-\sh_{3,2} = 2a_3+2m_3-a_3-2m_3=a_3.$$
  Note that $\ab_3 = a_3\basePoint$ can be computed by both $P_1$ and $P_2$, but $a_3$ cannot.
 \end{oss}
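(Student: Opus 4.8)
The plan is to verify separately the two assertions packed into this observation: first, that the scalar $a_3 = 2\sh_{3,1} - \sh_{3,2}$ is exactly the secret encoded by an honest Feldman sharing with the implicit linear polynomial $f_3(X) = a_3 + m_3 X$; and second, that the point $\ab_3 = a_3\basePoint$ is jointly computable by $P_1$ and $P_2$ whereas the scalar $a_3$ is not.

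For the first assertion I would appeal directly to the reconstruction step of the scheme recalled in \Cref{feldman}. Since the relevant (virtual) polynomial has degree $t = 1$, its free term $f_3(0) = a_3$ is recovered by Lagrange interpolation through the two evaluation points $(1, \sh_{3,1})$ and $(2, \sh_{3,2})$, where $\sh_{3,j} = f_3(j)$ by the indexing convention of \Cref{feldman}. The Lagrange coefficients at the abscissa $0$ are
\begin{equation}
\lambda_1 = \frac{0-2}{1-2} = 2, \qquad \lambda_2 = \frac{0-1}{2-1} = -1,
\end{equation}
so that $a_3 = \lambda_1 \sh_{3,1} + \lambda_2 \sh_{3,2} = 2\sh_{3,1} - \sh_{3,2}$. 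Equivalently, substituting $\sh_{3,1} = a_3 + m_3$ and $\sh_{3,2} = a_3 + 2m_3$ yields $2(a_3+m_3) - (a_3 + 2m_3) = a_3$; this shows the definition is forced by consistency with a degree-$1$ Feldman-VSS and is independent of the blinding coefficient $m_3$.

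For the second assertion I would push the same identity through the group homomorphism $x \mapsto x\basePoint$ from $\Int_q$ to $\G$. By linearity $\ab_3 = a_3\basePoint = 2\shb_{3,1} - \shb_{3,2}$, and both $\shb_{3,1}$ and $\shb_{3,2}$ are exchanged and verified during the first step of \Cref{key-gen}; hence each of $P_1$ and $P_2$ can evaluate $\ab_3$ from public points alone. Obtaining the scalar $a_3$, by contrast, requires \emph{both} scalar shards: $P_1$ holds $\sh_{3,1}$ but sees $\sh_{3,2}$ only through the committed point $\shb_{3,2}$, and symmetrically for $P_2$.

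The only genuinely delicate point, which I would state carefully, is this last impossibility: it is computational rather than information-theoretic. Recovering a missing scalar shard from its committed point is precisely a discrete-logarithm instance in $\G$, so ``$a_3$ cannot be computed'' must be read as ``cannot be computed by a single party in polynomial time, under the hardness of the discrete logarithm in $\G$''. I would make explicit that this is exactly the property the construction exploits, since it lets $P_1$ and $P_2$ agree on $P_3$'s public shard $\ab_3$ without either of them ever learning $P_3$'s secret scalar.
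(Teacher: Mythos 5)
Your proof is correct, and at its computational core it coincides with the paper's: the observation's entire argument is the substitution $2\sh_{3,1}-\sh_{3,2} = 2(a_3+m_3)-(a_3+2m_3) = a_3$, which appears verbatim as your ``equivalently'' step. What you add is genuinely different in two respects, both of which buy something. First, you derive the coefficients $2$ and $-1$ as the Lagrange coefficients at abscissa $0$ for the nodes $1$ and $2$; this upgrades the paper's consistency check into a uniqueness statement --- the definition of $a_3$ is \emph{forced} by degree-$1$ interpolation, not merely compatible with it --- and it generalizes immediately to other evaluation points or thresholds, whereas the paper's substitution only verifies the formula it already wrote down. Second, the paper asserts without justification that $\ab_3$ can be computed by both $P_1$ and $P_2$ while $a_3$ cannot; you actually prove both halves, pushing the identity through the homomorphism $x \mapsto x\basePoint$ to get $\ab_3 = 2\shb_{3,1}-\shb_{3,2}$ from the points exchanged (and verified via commitments) in the Key Generation of \Cref{key-gen}, and correctly observing that the negative claim is computational rather than information-theoretic: $P_1$ holds the scalar $\sh_{3,1}$ but sees $\sh_{3,2}$ only as the point $\shb_{3,2}$, and symmetrically for $P_2$, so recovering the missing scalar is a discrete-logarithm instance in $\G$. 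That caveat is accurate, absent from the paper, and is precisely the property the protocol exploits to let the two online parties fix $P_3$'s public shard without learning $P_3$'s secret.
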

 
\subsection{Signature Algorithm}
This protocol is used by two players, called $P_A$ and $P_B$, to sign messages.
$P_1, P_2$, and $P_3$ take the role of either $P_A$ or $P_B$ depending on the situation, see \Cref{ordinarysignature,recoverysignature}.

The participants agree on a message $M$ to sign and the goal of this protocol is to produce a valid EdDSA signature $(\mathcal{R},S)$ for the public key $\ab$.\\
The parameters involved are:
\begin{center}
    \begin{tabular}{|ll|}
    	\hline
    	\textbf{Player $A$} & \\ \hline
    	\textbf{Input:} & $M, \omega_A, \ab, r'_A$ \\
    	& $X_A,X_B$\\
    	\textbf{Public Output:} &  $(\mathcal{R},S)$\\
    	\hline
    \end{tabular}
    \begin{tabular}{|ll|}
    	\hline
    	\textbf{Player $B$} & \\ \hline
    	\textbf{Input:} & $M, \omega_B, \ab, r'_B$ \\
    	& $X_A,X_B$\\
    	\textbf{Public Output:} &  $(\mathcal{R},S)$\\
    	\hline
    \end{tabular}
\end{center}
The protocol works as follows:
\begin{enumerate}
	\item Generation of $\mathcal{R}$:\label{SignaturePrimoPunto}
	\begin{enumerate}[label=\alph*.]
	\item $P_i$ computes $\mathtt{K}=H(X_A,X_B)$. \label{SignatureDoveSiCalcolaK}
	\item $P_i$ computes $\V'=H_{\mathtt{Pur}}(\mathtt{K},M)\in\E'$.
	\item $P_i$ computes $r_i=f(r'_i\V')$.
	\item $P_i$ computes $\rb_i=r_i\basePoint$.
	\item $P_i$ sends $\rb_i$ to $P_j$ alongside a non-interactive zero-knowledge proof that it is correct given $\rtb_i$ (see~\Cref{ZKPpurify}).
	\item $P_i$ checks the correctness of the value $\rb_j$ received by verifying the attached NIZKP.
	\item $P_i$ computes $\rb = \rb_A + \rb_B$.
	\end{enumerate}
	\item Generation of $S$:
	\begin{enumerate}[label = \alph*.]
	\item $P_i$ computes $S_i = r_i + \omega_i H(\mathcal{R}\parallel \ab\parallel M)$
	\item $P_i$ sends $S_i$ to $P_j$.
	\item $P_i$ computes $S = S_A+S_B$.
	\end{enumerate}
	\item $P_i$ checks that $S\basePoint = \mathcal{R}+H(\mathcal{R}\parallel \ab\parallel M)\ab$.
\end{enumerate}
If any check fails the protocol aborts, otherwise the output signature is $(\mathcal{R},S)$.

\subsection{Ordinary Signature}\label{ordinarysignature}
This is the case where $P_1$ and $P_2$ wants to sign a message $m$. They run the signature algorithm with the following parameters (suppose wlog that $P_1$ plays the roles of $P_A$ and $P_2$ of $P_B$):
\begin{center}
    \begin{tabular}{|ll|}
    	\hline
    	\textbf{Player $A$} & \\ \hline
    	\textbf{Input:} & $\omega_1, r'_1,$\\ &$M, \ab, X_1, X_2$\\
    	\textbf{Public Output:} &  $(\mathcal{R},S)$\\
    	\hline
    \end{tabular}
    \begin{tabular}{|ll|}
    	\hline
    	\textbf{Player $B$} & \\ \hline
    	\textbf{Input:} & $\omega_2, r'_2,$\\ &$M, \ab, X_1, X_2$\\
    	\textbf{Public Output:} &  $(\mathcal{R},S)$\\
    	\hline
    \end{tabular}
\end{center}
\subsection{Recovery Signature}\label{recoverysignature}
If one between $P_1$ and $P_2$ is unable to sign, then $P_3$ has to come back online and a recovery signature is performed.

We have to consider two different cases, depending on who is offline.
First we consider the case in which $P_2$ is offline, therefore $P_1$ and $P_3$ sign.\\
The parameters involved are:
\begin{center}
    \begin{tabular}{|ll|}
    	\hline
    	\textbf{Player 1} & \\ \hline
    	\textbf{Input:} & $\omega_1, r'_1$,  \\
    	& $M, \ab, X_1, \rec_{1,3}, \rec_{2,3}$\\
    	\textbf{Public Output:} &  $(\mathcal{R},S)$\\
    	\hline
    \end{tabular}
    \begin{tabular}{|ll|}
    	\hline
    	\textbf{Player 3} & \\ \hline
    	\textbf{Input:} & $\sk_3,$ \\
    	& $M$\\
    	\textbf{Public Output:} &  $(\mathcal{R},S)$\\
    	\hline
    \end{tabular}
\end{center}
The workflow in this case is:
\begin{enumerate}
	\item Communication:
	\begin{enumerate}[label=\alph*.]
	\item $P_1$ contacts $P_3$ and sends $\ab, \rec_{1,3}, \rec_{2,3},X_1$.
	\item $P_3$ decrypts everything with the private key $\sk_3$ to recover the values $\sh_{1,3}$, $\sh_{3,1}$, $\sh_{2,3}$, $\sh_{3,2}$.
	\item $P_3$ computes $a_3= 2\sh_{3,1} -  \sh_{3,2}$ and $\ab_3 = a_3\basePoint$.
	\item\label{soloUnaVoltaInizio13} $P_3$ picks randomly $r'_3 \in \Int_{q'}$ and computes $\rtb_3=r'_3\basePoint'$.
	\item\label{soloUnaVoltaFine13} $P_3$ sends $X_3 = (\ab_3, \rtb_3)$ to $P_1$.
	\end{enumerate}
	\item $P_3$'s key creation:
	\begin{enumerate}[label=\alph*.]
	\item $P_3$ computes $x_3 = \sh_{1,3} + \sh_{2,3} + 2\sh_{3,2} - \sh_{3,1}$.
	\item $P_i$ proves in ZK the knowledge of $x_i$ using Schnorr's protocol (note that $x_1 = \omega_1/2$).
	\end{enumerate}
	\item Signature generation:
	\begin{enumerate}[label = \alph*.]
	\item $P_1$ computes $\tilde{\omega}_1 = \frac{3}{4} \omega_1$.
	\item $P_3$ computes $\omega_3 = -\frac{1}{2} x_3$.
	\item $P_1$ and $P_3$ perform the Signature Algorithm as $P_A$ and $P_B$ respectively, where $P_1$ uses $\tilde{\omega}_1$ instead of $\omega_A$ and $X_1$ instead of $X_A$, while $P_3$ uses $\omega_3$ in place of $\omega_B$ and $X_3$ in place of $X_B$ (the other parameters are straightforward).
	\end{enumerate}
\end{enumerate}

We consider now the second case in which $P_1$ is offline, therefore $P_2$ and $P_3$ sign.
The parameters involved are:

\begin{center}
    \begin{tabular}{|ll|}
        \hline
    	\textbf{Player 2} & \\ \hline
    	\textbf{Input:} & $\omega_2, r'_2$,  \\
    	& $M, \ab, X_2, \rec_{1,3}, \rec_{2,3}$\\
    	\textbf{Public Output:} &  $(\mathcal{R},S)$\\
    	\hline
    \end{tabular}
    \begin{tabular}{|ll|}
    	\hline
    	\textbf{Player 3} & \\ \hline
    	\textbf{Input:} & $\sk_3,$ \\
    	& $M$\\
    	\textbf{Public Output:} &  $(\mathcal{R},S)$\\
    	\hline
    \end{tabular}
\end{center}
The first two steps are identical to the previous case (for the ZKP of $x_2$ note that $x_2 = -\omega_2$).
\begin{enumerate}
\setcounter{enumi}{2}
    \item The Signature generation step proceeds as follows:
\begin{enumerate}[label=\alph*.]
	\item $P_2$ computes $\tilde{\omega}_2 = -3 \omega_2$.
	\item $P_3$ computes $\omega_3 = -2 x_3$.
	\item $P_2$ and $P_3$ perform the Signature Algorith as $P_A$ and $P_B$ respectively, where $P_2$ uses $\tilde{\omega}_2$ instead of $\omega_A$ and $X_2$ instead of $X_A$ and $P_3$ uses $\omega_3$ in place of $\omega_B$ and $X_3$ in place of $X_B$ (the other parameters are straightforward).
\end{enumerate}
\end{enumerate}

\begin{oss}
    $\rtb_3$ could be generated and published ahead of time (e.g. during the setup phase), and used for all subsequent recovery signatures.
    In this case, after $P_3$ computes $X_3$ for the first time then the value is fixed for all the successive executions.
    
    The reasons why it is necessary to have a different $X_i$ for each player will be more clear later, during the security discussion of the protocol in Section \ref{security}.
\end{oss}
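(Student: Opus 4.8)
The plan is to discharge the observation's two assertions in turn, and then to indicate where its closing remark about distinct $X_i$ is settled. First I would argue that precomputing $\rtb_3$ is consistent with the protocol as specified. The key point is that $P_3$'s seed $r'_3\in\Int_{q'}$ plays in the Recovery Signature exactly the role that $r'_1,r'_2$ play in the Ordinary Signature: it is the private input to the Purify nonce generator, and the nonce $r_i=f(r'_i\V')$ inherits all of its message-dependence from $\V'=H_{\mathtt{Pur}}(\mathtt{K},M)$, not from any fresh draw of the seed. Since $r'_1,r'_2$ are already sampled once (during Key Generation) and reused across every ordinary signature, sampling $r'_3$ once (during Setup, or at the first recovery) and reusing it is the natural analogue, and no step of the Signature Algorithm requires $r'_3$ to be refreshed.

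Next I would show that fixing $r'_3$ forces $X_3$ to be fixed. By construction $X_3=(\ab_3,\rtb_3)$ with $\ab_3=a_3\basePoint$ and $a_3=2\sh_{3,1}-\sh_{3,2}$. The shards $\sh_{3,1},\sh_{3,2}$ are recovered deterministically by $P_3$ from the ciphertexts $\rec_{1,3},\rec_{2,3}$, so $\ab_3$ is identical in every recovery; and once $\rtb_3=r'_3\basePoint'$ is computed from the fixed seed, it too is constant. Hence $X_3$ is the same across all successive recovery executions, which is precisely the second assertion.

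The substantive content to verify is that reusing the seed does not weaken security, and here I would invoke \Cref{LemmaPurify}: for a random, secret $r'_3\in\Int_{q'}$ the map $f(r'_3 H_{\mathtt{Pur}}(\cdot))$ is indistinguishable from a random oracle onto $\Int_q$. Because the only varying argument of this map is $(\mathtt{K},M)$, reusing a fixed seed across messages is exactly the intended mode of use of the PRF, and it is what makes the resulting signature deterministic once the signer set is fixed, which is the stated benefit of the optimisation. I expect the main obstacle to be cleanly separating this determinism argument from the nonce-reuse concern: if two distinct signer sets could ever produce the same $(\mathtt{K},M)$, they would reuse the same nonce point $\mathcal{R}$ with different linear coefficients $\omega_i$, leaking the key. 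This is avoided precisely because $\mathtt{K}=H(X_A,X_B)$ and the $X_i$ are pairwise distinct, so distinct signer sets feed distinct inputs to $H_{\mathtt{Pur}}$ and hence obtain distinct $\V'$ and distinct nonces; the full justification that distinct $X_i$ suffice is deferred to the security analysis of \Cref{security}, as the observation notes.
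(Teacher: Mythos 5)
Your proposal is correct and coincides with the paper's own treatment: the paper states this observation without a formal proof (the fixedness of $X_3$ being immediate, exactly as you argue, from the deterministic recovery of $a_3=2\sh_{3,1}-\sh_{3,2}$ out of $\rec_{1,3},\rec_{2,3}$ and from the reuse of the seed $r'_3$, in full analogy with $r'_1,r'_2$ sampled once at Key Generation), and it defers the need for distinct $X_i$ to \Cref{security}, where precisely the attack you sketch is invoked: an adversary seeing all exchanged messages asks for the same message to be signed first by the honest pair and then by itself with an honest party, citing Section 4 of \cite{Purify}. One small precision to your closing sketch: with the same $(\mathtt{K},M)$ but different signer pairs the aggregate point $\mathcal{R}$ generally differs (the partner's share $\rb_B$ changes), so what is reused is the common player's individual nonce $r_i=f(r'_i\V')$, whose two partial signatures $r_i+\omega_i H(\mathcal{R}\parallel\ab\parallel M)$ and $r_i+\tilde{\omega}_i H(\mathcal{R}'\parallel\ab\parallel M)$, with distinct known coefficients and known hash values, give two linear equations solvable for the key share --- the conclusion you draw is unaffected.
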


 \subsection{Key Derivation}\label{key-derivation}
 In order to perform the key derivation we need a derivation index $i$ and the common secret $\mathcal{D}$ created during the Key Generation protocol.
\\The derivation is performed as follows:
\begin{itemize}
    \item $P_1$ and $P_2$ perform the key derivation:
    \begin{itemize}
        \item[$\diamond$] $\omega_1 \to \omega_1^i = \omega_1 + 2H(\mathcal{D}\parallel i),$
        \item[$\diamond$] $\omega_2 \to \omega_2^i = \omega_2 - H(\mathcal{D}\parallel i);$
    \end{itemize}
    \item $P_1$ and $P_3$ perform the key derivation:
    \begin{itemize}
        \item[$\diamond$] $\omega_1 \to \omega_1^i = \omega_1 + \frac{3}{2}H(\mathcal{D}\parallel i),$
        \item[$\diamond$] $\omega_3 \to \omega_3^i = \omega_3 - \frac{1}{2}H(\mathcal{D}\parallel i);$
    \end{itemize}
    \item $P_2$ and $P_3$ perform the key derivation:
    \begin{itemize}
        \item[$\diamond$] $\omega_2 \to \omega_2^i = \omega_2 + 3H(\mathcal{D}\parallel i),$
        \item[$\diamond$] $\omega_3 \to \omega_3^i = \omega_3 -2H(\mathcal{D}\parallel i);$
    \end{itemize}
    \item the public key is always updated like this: ${\ab \to \ab^i = \ab+H(\mathcal{D}\parallel i)\basePoint}.$
\end{itemize}
\begin{oss}
    We observe that the algorithm outputs valid keys, such that, for example:
    $$ (\omega_1^i+\omega_2^i)\basePoint = \ab^i.$$
    Since $(\omega_1^i+\omega_2^i) = \omega_1 + \omega_2 + H(\mathcal{D}\parallel i)$ we have that:
    $$ (\omega_1^i+\omega_2^i)\basePoint = (\omega_1 + \omega_2 + H(\mathcal{D}\parallel i))\basePoint = \ab + H(\mathcal{D}\parallel i)\basePoint = \ab^i.$$
    With the same procedure we can prove that also the other pairs of derived keys are consistent.
\end{oss}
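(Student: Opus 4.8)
The plan is to reduce the claim, for each of the three admissible signing pairs $\{P_1,P_2\}$, $\{P_1,P_3\}$, $\{P_2,P_3\}$, to a one-line identity obtained from the linearity of the map $\lambda\mapsto\lambda\basePoint$. First I would record the base consistency that the derivation inherits: for each pairing the two signing scalars already sum, in $\Int_q$, to the global secret $a$ (recall $\ab=a\basePoint$). Concretely $\omega_1+\omega_2=a$ for the ordinary pair, and $\tilde{\omega}_1+\omega_3=a$, $\tilde{\omega}_2+\omega_3=a$ for the two recovery pairs. Each of these follows by substituting the Feldman evaluations $\sh_{i,j}=a_i+m_i j$ (with $a_3=2\sh_{3,1}-\sh_{3,2}$, as in \Cref{ossu3}) into the definitions $\omega_1=2x_1$, $\omega_2=-x_2$, $x_3=\sh_{1,3}+\sh_{2,3}+2\sh_{3,2}-\sh_{3,1}$, together with the rescalings $\tilde{\omega}_1=\tfrac34\omega_1$, $\omega_3=-\tfrac12 x_3$ (first recovery case) and $\tilde{\omega}_2=-3\omega_2$, $\omega_3=-2x_3$ (second): in every case the $m_i$-contributions cancel and only $a$ survives. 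These are exactly the identities that make the plain (non-derived) signatures verify against $\ab$.

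The core of the argument is then to observe that, in each pairing, the two derivation offsets sum to a single copy of $H(\mathcal{D}\parallel i)$. Reading the three rules, the offsets are $+2H(\mathcal{D}\parallel i)$ and $-H(\mathcal{D}\parallel i)$ for $\{P_1,P_2\}$; $+\tfrac32 H(\mathcal{D}\parallel i)$ and $-\tfrac12 H(\mathcal{D}\parallel i)$ for $\{P_1,P_3\}$; and $+3H(\mathcal{D}\parallel i)$ and $-2H(\mathcal{D}\parallel i)$ for $\{P_2,P_3\}$. Since $2-1=\tfrac32-\tfrac12=3-2=1$, each offset pair sums to $H(\mathcal{D}\parallel i)$. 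Writing $\omega_A^i,\omega_B^i$ for the derived signing scalars of an admissible pair and combining with the base consistency gives
\[
(\omega_A^i+\omega_B^i)\basePoint=\bigl(a+H(\mathcal{D}\parallel i)\bigr)\basePoint=a\basePoint+H(\mathcal{D}\parallel i)\basePoint=\ab+H(\mathcal{D}\parallel i)\basePoint=\ab^i,
\]
which is the asserted consistency; the displayed special case $(\omega_1^i+\omega_2^i)\basePoint=\ab^i$ is the instance $\{P_1,P_2\}$.

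There is no genuinely hard step: the statement is a finite, purely algebraic check. The one thing to be careful about is the bookkeeping for the recovery pairs, where the shards are rescaled before signing, so the derivation must be applied to the signing scalars actually in use (the $\tilde{\omega}$'s and the recovery $\omega_3$, rather than the raw $\omega_1,\omega_2$) and one must confirm that the offset coefficients still sum to one unit of $H(\mathcal{D}\parallel i)$. All arithmetic takes place in $\Int_q$, so the fractional coefficients $\tfrac34,\tfrac12,\tfrac32$ are read as inverses modulo $q$; since $q$ is an odd prime, $2$ (hence $4$) is invertible and these are well defined.
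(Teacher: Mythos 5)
Your proof is correct and takes essentially the same route as the paper's own argument: sum the two derived scalars, observe that the derivation offsets ($2-1=\tfrac32-\tfrac12=3-2=1$) collapse to a single $H(\mathcal{D}\parallel i)$, and conclude by linearity of $\lambda\mapsto\lambda\basePoint$ together with the base identity that each admissible pair's scalars sum to $a$. You merely make explicit two things the paper leaves implicit --- the Feldman/Lagrange check that $\omega_1+\omega_2=\tilde{\omega}_1+\omega_3=\tilde{\omega}_2+\omega_3=a$, and the bookkeeping point that for the recovery pairs the derivation must act on the rescaled signing scalars actually in use --- both of which are consistent with the paper's intent and with its closing remark that the other pairs follow ``with the same procedure''.
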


\section{Security Proof} \label{security}
As customary for digital signature protocols, we state the security of our scheme as an \emph{unforgeability} property, defined as follows:
\begin{defn}
	We say that a $(t, n)$-threshold signature scheme is unforgeable if no malicious adversary who corrupts at most $t-1$ players can produce with non-negligible probability the signature on a new message $m$, given the view of \textbf{Threshold-Sign} on input messages $m_1,...,m_k$ (which the adversary adaptively chooses), as well as the signatures on those messages.
\end{defn}

Referring to this definition, the security of our protocol derives from the following theorem, whose proof is the topic of this section:
\begin{thm} \label{teoEdDSA}
    Let $\xi : \{0,1\}^{2b} \rightarrow \Int_q$ be the encoding that maps bit-strings into elements of $\Int_q$ via \emph{little-endian} encoding and reduction modulo $q$, let ${\pi : \{0,1\}^{2b} \rightarrow \{0,1\}^{n}}$ be the function that truncates a bit-string to $n$ bits: ${\pi(h) = h_0\parallel \ldots\parallel h_{n-1}}$.
    Then, assuming that:
	\begin{itemize}
	    \item $H'$ is a cryptographic hash function such that $H=\xi \circ H'$ is a good PRNG as per~\Cref{good-PRNG}, and $\pi \circ H'$ is a strong PRNG as per~\Cref{strong-PRNG};
		\item the EdDSA signature scheme with parameters $(p, b, H', c, n, a, d, \basePoint, q)$ is unforgeable;
		\item $\Com, \Ver$ is a non-malleable commitment scheme as defined in~\Cref{commitments};
 	    \item the Decisional Diffie Hellman Assumption defined in~\Cref{DDH} holds for both the curves $\E$ and $\E'$;
		\item the encryption algorithm used by $P_3$ is IND-CPA, as per \Cref{IND-CPA};
	\end{itemize}
	our threshold protocol built with the hash function $H$ is unforgeable.
\end{thm}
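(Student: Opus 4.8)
The plan is to prove unforgeability by a game-based reduction: assuming a probabilistic polynomial-time adversary $\adv$ that forges the threshold scheme with non-negligible probability $\epsilon$ after corrupting one of the three players, I would construct a forger $\mathscr{F}$ that breaks the unforgeability of centralized EdDSA with parameters $(p,b,H',c,n,a,d,\basePoint,q)$, contradicting the second hypothesis. The forger $\mathscr{F}$ receives the target public key $\ab$ together with access to a centralized signing oracle; it runs $\adv$ internally, playing the honest party (or parties), and must (i) drive the simulated Key Generation so that the aggregate public key equals the challenge $\ab$, and (ii) answer every \textbf{Threshold-Sign} query by translating a signature obtained from the oracle into a faithful threshold transcript. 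Since $t=2$ and $n=3$, the adversary corrupts a single player, so I would split the argument into three cases according to whether $P_1$, $P_2$, or $P_3$ is corrupted, covering both the ordinary and the recovery signing sessions.

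For the Key Generation simulation I would exploit the non-malleability of $\Com,\Ver$ (\Cref{commitments}) to ensure that $\adv$'s committed contribution $(\ab_i,\shb_{3,i},\rtb_i,\mb_i)$ is chosen independently of the honest party's, so that after decommitment $\mathscr{F}$ may rewind $\adv$ on the Schnorr proof and, by its soundness (\Cref{zkp}), extract the corrupted player's secret share. Knowing this share, $\mathscr{F}$ forces the honest contribution so that the aggregate key coincides with $\ab$, simulating the Feldman-VSS exactly as in \Cref{feldman} (forcing the commitments to random evaluations without knowledge of the underlying secret). When $P_3$ is honest, or whenever the corrupted party might otherwise learn the recovery shards, the confidentiality of the ciphertexts $\rec_{1,3},\rec_{2,3}$ is reduced to the IND-CPA security of the encryption scheme (\Cref{IND-CPA}).

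To answer a signing query on $M$, $\mathscr{F}$ obtains $(\mathcal{R},S)$ from the centralized oracle and must make the joint session output exactly this pair. It sets the honest nonce commitment to $\rb_{\mathrm{honest}}=\mathcal{R}-\rb_{\mathrm{adv}}$ and the honest partial signature to $S_{\mathrm{honest}}=S-S_{\mathrm{adv}}$, then replaces the genuine Purify correctness proof with a simulated one, using the zero-knowledge property guaranteed by \Cref{LemmaPurify}. The main obstacle is precisely the determinism of the nonce: since $r_i=f(r'_i\V')$ is verifiably tied to the seed $\rtb_i$, $\mathscr{F}$ can neither choose $\rb_{\mathrm{honest}}$ freely while opening an honest Purify proof, nor does it control the oracle's internal nonce. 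Here I would again invoke \Cref{LemmaPurify}: because $f(u H_{\mathtt{Pur}}(\cdot))$ is indistinguishable from a random oracle onto $\Int_q$ under DDH on $\E'$ (\Cref{DDH}), the forced value $\rb_{\mathrm{honest}}$ is distributed indistinguishably from a genuine one and the simulated ZKP is accepted, while consistency across repeated queries with a fixed signer set follows from the determinism of the oracle itself.

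Finally, I would assemble a hybrid argument interpolating between the real protocol and the fully simulated one, charging each transition to the corresponding hypothesis: commitment hiding and non-malleability, the zero-knowledge simulators, DDH on $\E$ and $\E'$, and IND-CPA. The good-PRNG and strong-PRNG hypotheses (\Cref{good-PRNG},~\Cref{strong-PRNG}) guarantee that the scalars and nonces produced through $H=\xi\circ H'$ are distributed indistinguishably from the uniform values the simulator substitutes, so that $\adv$'s view in the simulation is indistinguishable from a real execution. Consequently $\adv$ still outputs, with probability negligibly close to $\epsilon$, a valid forgery $(\mathcal{R}^*,S^*)$ on a fresh message $M^*$; since the simulated public key equals $\ab$, this pair is a valid centralized EdDSA forgery that $\mathscr{F}$ returns, completing the reduction.
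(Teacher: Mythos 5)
Your proposal follows essentially the same route as the paper's proof: a game-based reduction in which the forger embeds the centralized EdDSA challenge key into Key Generation via rewinding (relying on the non-malleability of the commitment scheme so the adversary cannot adapt its contribution), extracts the corrupted player's secrets through the ZKPs, and answers signing queries by back-computing the honest shares $\rb_1=\rb_f-\rb_2$ and $S_1=S_f-S_2$ from oracle signatures while simulating the Purify NIZKPs, with a case split over which player is corrupted. The only differences are bookkeeping: the paper places the key-forcing rewind at the commitment step (requiring the adversary to commit first, i.e.\ a non-rushing adversary, with the Schnorr rewind used separately for extraction) and makes the probability accounting explicit via the good-tape lemma, obtaining success probability at least $\e^3/8$ rather than ``negligibly close to $\e$''.
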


The proof will use a classical game-based argument, our goal is to show that if there is an adversary $\ab$ that forges the threshold scheme with a non-negligible probability $\e>\lambda(\mathtt{k})^{-t}$, where $\mathtt{k}$ is the security parameter, for a polynomial $\lambda(x)$ and $t>0$, then we can build a forger $\F$ that forges the centralized EdDSA scheme with non-negligible probability as well.

Since the algorithm presented is a $(2,3)$-threshold signature scheme, the adversary will control one player and $\F$ will simulate the remaining two.
Since the role of $P_3$ is different than those of $P_2$ and $P_3$, we have to consider two distinct cases: one where $\adv$ controls $P_3$ and one where $\adv$ controls one between $P_1$ and $P_2$ (whose roles are symmetrical).
The second case is way more interesting and difficult, so it will be discussed first, and for now we suppose without loss of generality that $\adv$ controls $P_2$.

The adversary $\adv$ interacts in our protocol as follows: it first participates in the key generation protocol to generate a public key $\ab$ for the threshold scheme, then it requests the signature on some messages $m_1,...,m_l$.
During this phase it can participate in the signature generation or it can query for signatures generated by $P_1, P_3$.
Eventually the adversary outputs a message $m\ne m_i\; \forall i$ and a valid signature on $m$ with probability at least $\e$.
If we denote with $\tau_\adv$ the adversary's tape and with $\tau_i$ the tape of the honest player $P_i$ we can write:
  \begin{equation}\label{avversario}
  	 \Pb_{\tau_i, \tau_\adv}[\adv(\tau_\adv)_{ P_i(\tau_i)} = \mathtt{forgery}] \ge \e,
  \end{equation}
  where $\Pb_{\tau_i, \tau_\adv}$ means that the probability is taken over the random tape $\tau_\adv$ of the adversary and the random tape $\tau_i$ of the honest player, while $\adv(\tau_\adv)_{P_i(\tau_i)}$ is the output of the iteration between the adversary $\adv$, running on tape $\tau_\adv$, and the player $P_i$, running on tape $\tau_i$ .
  
\begin{defn}[Good Tape]
  We say that an adversary's random tape $\tau_\adv$ is good if:
  \begin{equation}\label{defbuono}
    \Pb_{\tau_i}[\adv(\tau_\adv)_{ P_i(\tau_i)} = \mathtt{forgery}] \ge \frac{\e}{2}.
  \end{equation}
\end{defn}
\noindent
Now we have the following Lemma, introduced in~\cite{gennaro}:
\begin{lem} \label{goodTape}
	If $\tau_\adv$ is a tape chosen uniformly at random, the probability that it is a good one is at least $\frac{\e}{2}$.
\end{lem}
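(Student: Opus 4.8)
The plan is to use a simple averaging (``splitting lemma'') argument. For a fixed adversary tape $\tau_\adv$, write
\begin{equation}
  p(\tau_\adv) = \Pb_{\tau_i}[\adv(\tau_\adv)_{P_i(\tau_i)} = \mathtt{forgery}]
\end{equation}
for the conditional forgery probability when only the honest player's tape stays random; thus $\tau_\adv$ is good exactly when $p(\tau_\adv) \ge \e/2$. The key observation is that, since $\tau_i$ and $\tau_\adv$ are drawn independently, the joint success probability in \Cref{avversario} is literally the average of $p$ over a uniform adversary tape, i.e. $\E_{\tau_\adv}[p(\tau_\adv)] = \Pb_{\tau_i,\tau_\adv}[\adv(\tau_\adv)_{P_i(\tau_i)} = \mathtt{forgery}] \ge \e$.

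First I would partition the space of adversary tapes into the set $G$ of good tapes and its complement $\bar G$, and split the expectation accordingly:
\begin{equation}
  \e \le \E_{\tau_\adv}[p] = \Pb[G]\,\E[p \mid G] + \Pb[\bar G]\,\E[p \mid \bar G].
\end{equation}
Then I would bound each conditional expectation by its extreme value: on $G$ I use only that $p$ is a probability, hence $\E[p \mid G] \le 1$, while on $\bar G$ the defining inequality gives $\E[p \mid \bar G] \le \e/2$. Substituting yields $\e \le \Pb[G] + (1-\Pb[G])\,\tfrac{\e}{2}$, and solving for $\Pb[G]$ gives
\begin{equation}
  \Pb[G] \ge \frac{\e/2}{1-\e/2} \ge \frac{\e}{2},
\end{equation}
which is exactly the claim, the last step using $1 - \e/2 \le 1$ to discard the denominator.

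Since this reduces to a one-line probabilistic inequality, there is no genuine obstacle; the only points requiring minor care are recognising that the two tapes are sampled independently (so that the overall success probability really is the average of $p(\tau_\adv)$), and using the crude bound $p \le 1$ on good tapes rather than attempting a sharper estimate of $\E[p \mid G]$, which would be both unavailable and unnecessary. This is the standard averaging step underlying rewinding-based reductions, and I expect the author's proof to follow the same short computation.
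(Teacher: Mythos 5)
Your proposal is correct and follows essentially the same argument as the paper: both decompose the overall success probability by conditioning on whether the adversary's tape is good, bound the conditional success probability by $1$ on good tapes and by $\e/2$ on bad ones, and solve the resulting inequality to get $\Pb[G] \ge \frac{\e/2}{1-\e/2} \ge \frac{\e}{2}$. Your phrasing via the expectation $\E_{\tau_\adv}[p(\tau_\adv)]$ is just a cleaner packaging of the paper's explicit case split over good and bad tapes; the underlying computation is identical.
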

\begin{proof}
	In the proof we will simplify the notation writing $\adv(\tau_\adv,\tau_i) = \mathtt{forgery} $ instead of ${\adv(\tau_\adv)_{P_i(\tau_i)} = \mathtt{forgery}}$. In the context of this proof, we will write $b$ to identify a good tape, while $c$ will be a bad one.
	We can rewrite \Cref{avversario} in this way:
	\begin{align}
	    A &= \Pb_{\tau_i,\tau_\adv}(\tau_\adv = b, A(\tau_\adv, \tau_i)= \mathtt{forgery}) +  \Pb_{\tau_i,\tau_\adv}(\tau_\adv = c, A(\tau_\adv, \tau_i) = \mathtt{forgery})\nonumber\\
	    &= \Pb_{\tau_i,\tau_\adv} (\tau_\adv = b)  \Pb_{\tau_i,\tau_\adv}(A(\tau_\adv, \tau_i) = \mathtt{forgery}\vert\tau_\adv = b)\nonumber\\
	    &\phantom{=}\, +  \Pb_{\tau_i,\tau_\adv}(\tau_\adv = c)  \Pb_{\tau_i,\tau_\adv}(A(\tau_\adv, \tau_i) = \mathtt{forgery} \vert\tau_\adv = c).
	\end{align}
	Trivially we have that $\Pb_{\tau_i,\tau_\adv}(A(\tau_\adv, \tau_i) = \mathtt{forgery}\vert\tau_\adv = b) < 1,$ and from the definition of good tape in equation \ref{defbuono} we get:
	\begin{equation}
	    \Pb_{\tau_i,\tau_\adv}(A(\tau_\adv, \tau_i) = \mathtt{forgery}\vert\tau_\adv = c) < \frac{\e}{2}.
	\end{equation}
	Now we want to solve for $x =  \Pb_{\tau_i,\tau_\adv}(\tau_\adv = b)$, so we get:
	\begin{equation}
	    \e \le A < x \cdot 1 + (1 - x) \cdot \frac{\e}{2} = x \left(1 - \frac{\e}{2}\right) + \frac{\e}{2},
	\end{equation}
that leads us to the conclusion:
\begin{equation}
    x \ge \frac{\e - \frac{\e}{2}}{1 - \frac{\e}{2}} \ge \frac{\e}{2-\e} \ge \frac{\e}{2}.
\end{equation}
\end{proof}
From now on we will suppose that the adversary is running on a good random tape.

\
\\First of all we need to deal with the key generation algorithm.
The simulator $\F$ plays the role of $P_1$ and $\adv$ plays the role of $P_2$.
Before starting the simulation, $\F$ receives from its challenger a public key for the IND-CPA encryption algorithm and an EdDSA public key.
The goal is to trick $\adv$ in order to force the public key output by the multi-party computation to match the EdDSA key given by the challenger.
\\
\\
The simulation works as follows:
\begin{enumerate}
	\item $\F$ receives from the challenger an EdDSA public key $\ab_c$ and the public encryption key $\pk_3$.
	
	\item\label{Edrepeatfirst} $P_i$ picks randomly  $a_i, \sh_{3,i}, m_i \in \Int_q$, $r'_i\in\Int_{q'}$, and sets $\ab_i = a_i\basePoint$, ${\shb_{3,i} = \sh_{3,i}\basePoint}$, $\rtb_i = r'_i\basePoint'$, $\mb_i = m_i\basePoint$.
	
	\item $P_i$ computes $[\mathsf{KGC}_i, \mathsf{KGD}_i] = \Com((\ab_i, \shb_{3,i}, \rtb_i, \mb_i))$.
	
	\item $P_2$ sends $\mathsf{KGC}_2$ to $P_1$.
	
	\item\label{Edcommit} $P_1$ sends $\mathsf{KGC}_1$ to $P_2$.
	It is important that $P_2$ sends its commitment before $P_1$, see~\Cref{rush}.
	
	\item $P_i$ sends $\mathsf{KGD}_i$ to $P_j$.
	
	\item $P_i$ gets $(\ab_j, \shb_{3,j}, \rtb_j, \mb_j) = \Ver(\mathsf{KGC}_j, \mathsf{KGD}_j)$.
	
	\item At this point $\F$ knows all the parameters involved in the computation of $\ab$, the first part of the key.
	So it rewinds $\adv$ to the step \ref{Edcommit}, after the commitment of $\adv$, with the aim to make $\ab=\ab_c$.
	
	\item $\F$ computes $\hat{\ab} = \ab_c - \ab_2 - 2 \shb_{3,1} + \shb_{3,2}$.
	
	\item $\F$ picks randomly $\sh_{1,2} \in \Int_q$ and computes $\widehat{\mb} = \frac{1}{2}(\sh_{1,2}\basePoint - \hat{\ab})$ to simulate the VSS (since $\F$ is not able to compute the random polynomial $f(x)$) as explained in \Cref{feldman}.
	
	\item $\F$ computes the commitment $[\hat{\mathsf{KGC}}_i,\hat{\mathsf{KGD}}_i] = \Com((\hat{\ab}, \shb_{3,i}, \rtb_i, \widehat{\mb}))$ and sends it to $\adv$ as $P_1$.
	
	\item $P_1$ sends $\hat{\mathsf{KGD}}$ to $P_2$.
	
	\item $P_i$ picks randomly $\sh_{1,3} \in \Int_q$ and encrypts $\sh_{1,3}, \sh_{3,1}$ with $\pk_3$, obtaining $\rec_{i,3}$ ($\F$ has to simulate the NIZKPs if the encryption supports DLOG verification).
	
	\item $P_i$ sends $\sh_{i,j}, \rec_{i,3}$ to $P_j$.

	\item\label{Edrepeatlast}
	Since $\F$ does not know the discrete logarithm of $\hat{\ab}$ it can not compute $x_1$, so it simulates the ZKP with $\adv$.
	
	\item $P_2$ can calculate $x_2$ and execute the ZKP, from which $\F$ extracts the value of $x_2$.
	
	\item $P_i$ can compute the key $\ab$, moreover $P_2$ can compute $\omega_2$ (for $\F$ it is impossible, since it does not know $x_1$).
\end{enumerate}

\begin{oss} \label{rush}
	In the simulation it is crucial that the adversary broadcasts $\mathsf{KGC}_2$ before $\F$.
	Inverting the order will cause this simulation to fail, since after the rewind $\adv$ could change its commitment.
	Due to the non-malleability property we are assured that $\adv$ can not deduce anything about the content of these commitments, but nevertheless it could use it as a seed for the random generation of its values.
	In this case $\F$ guesses the right $\hat{\ab}$ only with probability $\frac{1}{q}$ where $q$ is the size of the group, so the expected time is exponential.

	It is possible to swap the order in the first step using an equivocable commitment scheme with a secret trapdoor.
	In this case we only need to rewind at the decommitment step, we change $\mathsf{KCD}_1$ in order to match $\hat{\ab}$ and $\hat{\mb}$.
	In this way we could prove the security of the protocol also in the presence of a \emph{rushing adversary}, but we need an additional hypothesis regarding the commitment scheme.
\end{oss}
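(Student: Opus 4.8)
The plan is to substantiate the three claims of \Cref{rush}: that forcing $\ab=\ab_c$ works because $\adv$ commits before $\F$, that reversing the order degrades the simulator to success probability $1/q$ per attempt, and that an equivocable commitment restores an expected-polynomial-time simulation even against a rushing adversary. First I would make the success condition explicit. Since $\F$ plays $P_1$ and commits to $\hat{\ab}$ in place of $\ab_1$, the public key produced by the key generation is $\ab = \hat{\ab} + \ab_2 + 2\shb_{3,1} - \shb_{3,2}$, so the simulation succeeds exactly when $\hat{\ab} = V$, where $V := \ab_c - \ab_2 - 2\shb_{3,1} + \shb_{3,2}$. Of the terms in $V$, only $\ab_2$ and $\shb_{3,2}$ lie outside $\F$'s control, being the adversary's committed values; everything therefore reduces to whether $V$ is already fixed at the moment $\F$ must decide $\hat{\ab}$.

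For the prescribed ordering, because $\adv$ broadcasts $\mathsf{KGC}_2$ before $\F$ sends $\mathsf{KGC}_1$, the binding property pins $\ab_2$ and $\shb_{3,2}$, and hence $V$, to single values that survive the rewind to step \ref{Edcommit}: the adversary has already committed, and any admissible opening yields the same $V$ that $\F$ extracted in the first run. Hence $\F$ solves $\hat{\ab}=V$ deterministically and succeeds after $O(1)$ rewinds, i.e. in expected polynomial time, and the hiding property guarantees that $\adv$ gains nothing from the fact that $\F$ now commits second.

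The delicate part is the reversed ordering. If $\F$ commits first, then after the rewind it must replace $\mathsf{KGC}_1$ by $\hat{\mathsf{KGC}}_1$, and $\adv$ commits only afterwards. Non-malleability forbids $\adv$ from producing a commitment whose opening is \emph{related} to the value hidden in $\hat{\mathsf{KGC}}_1$, but it imposes no restriction on $\adv$ deriving its own honest randomness — and hence $a_2,m_2,\sh_{3,2}$, and thus $\ab_2$ and $\shb_{3,2}$ — from the bit-string $\hat{\mathsf{KGC}}_1$ it observes, for instance by using it as a PRG seed; such an adversary remains perfectly protocol-compliant. This turns the success condition into a fixed-point equation $\hat{\ab} = V(\hat{\mathsf{KGC}}_1)$ whose right-hand side reacts to $\F$'s own choice. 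Since $\F$ cannot learn $V(\hat{\mathsf{KGC}}_1)$ without first sending $\hat{\mathsf{KGC}}_1$, and under the pseudorandomness of the seeded generation $V$ is computationally unpredictable and effectively uniform over a group of order $q$, each commitment $\F$ tries hits the fixed point with probability $1/q$, so the expected number of attempts is $\Theta(q)$, exponential in the security parameter.

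Finally I would show that an equivocable commitment with a trapdoor removes the obstruction. $\F$ first commits to an arbitrary placeholder, which satisfies even a rushing $\adv$ that insists on seeing $P_1$'s commitment before sending its own; after $\adv$ \emph{decommits} and thereby freezes $\ab_2$ and $\shb_{3,2}$, $\F$ rewinds only to the decommitment step and uses the trapdoor to open $\hat{\mathsf{KGC}}_1$ to the unique $\hat{\ab}$ satisfying $\hat{\ab}=V$, again succeeding in one attempt, at the cost of the stated additional hypothesis that the scheme be equivocable. The main obstacle I anticipate is the third step: making rigorous that non-malleability does \emph{not} prevent an adversary from using the transcript as a randomness source, and converting ``$V$ is unpredictable to $\F$'' into the clean $1/q$ bound. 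This requires exhibiting an adversary that passes every honesty check — internally consistent values and verifying ZKPs — yet correlates its contribution with $\F$'s first message, and arguing that under hiding no efficient $\F$ can outperform blindly guessing the fixed point.
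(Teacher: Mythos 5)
Your proposal is correct and follows essentially the same reasoning the paper itself sketches in \Cref{rush}: the adversary-first ordering pins $\ab_2$ and $\shb_{3,2}$ across the rewind so that $\hat{\ab}$ can be solved for directly, the reversed ordering admits a protocol-compliant adversary that derives its values from $\F$'s commitment string as a seed and so reduces the simulator to guessing $\hat{\ab}$ with probability $\frac{1}{q}$, and a trapdoor-equivocable commitment restores the simulation against a rushing adversary by rewinding only at the decommitment step. Your attribution of the pinning to the \emph{binding} property (where the paper loosely cites non-malleability) is if anything a cleaner assignment of the commitment properties, and your caveat that the number of rewinds is expected-polynomial via the good-tape bound (rather than literally $O(1)$, since $\adv$ may refuse to decommit) is consistent with \Cref{lemma1Eddsa}.
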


\begin{lem}  \label{lemma1Eddsa}
	The simulation terminates in expected polynomial time and it is indistinguishable from the real protocol.
\end{lem}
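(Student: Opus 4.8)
The plan is to establish the two assertions of the Lemma separately: that $\F$ runs in expected polynomial time, and that the transcript it produces is computationally indistinguishable from a genuine execution of Key Generation between an honest $P_1$ and $\adv$ playing $P_2$.

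For the running time, the only sources of rewinding are the single rewind to step~\ref{Edcommit} and the knowledge extractor applied to $\adv$'s Schnorr proof of $x_2$. I would first argue that the commitment rewind succeeds essentially in one shot: because $\adv$ broadcasts $\mathsf{KGC}_2$ \emph{before} $\F$ commits (see~\Cref{rush}), after the rewind the adversary's committed tuple $(\ab_2,\shb_{3,2},\rtb_2,\mb_2)$ is pinned down, so by the binding property $\adv$ can only decommit to the same values---keeping $\F$'s choice of $\hat{\ab}$ consistent with the target $\ab=\ab_c$---or abort. Since the abort event has, up to a negligible difference justified by the hiding of the commitment, the same probability as in the real interaction, the rewind costs only a constant factor. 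The Schnorr extractor runs in expected polynomial time by the standard rewinding-on-accepting-transcripts argument, and as $\adv$ is assumed to run on a good tape it completes the proof with probability $\ge \e/2$, so the expected number of extra rewinds is $O(1/\e)=\mathrm{poly}(\mathtt{k})$ because $\e$ is non-negligible. Composing, the whole simulation is expected polynomial time.

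For indistinguishability I would compare $\adv$'s view field by field and then glue the comparisons together with a hybrid argument. The crafted value $\hat{\ab}=\ab_c-\ab_2-2\shb_{3,1}+\shb_{3,2}$ is uniform, because $\ab_c$ is a uniformly distributed public key independent of the quantities $\F$ subtracts; hence it is distributed exactly like the honest $\ab_1=a_1\basePoint$. The pair $(\widehat{\mb},\sh_{1,2})$ reproduces the Feldman-VSS simulation of~\Cref{feldman}: in both worlds one of the two values is uniform and the other is forced by the verification relation $\sh_{1,2}\basePoint=\hat{\ab}+2\widehat{\mb}$, so the joint distributions coincide. The two remaining objects are where the computational assumptions enter: the ciphertext $\rec_{1,3}$ encrypts an inconsistent random share, which is indistinguishable from the honest encryption by IND-CPA (\Cref{IND-CPA}) since $\adv$ never holds $\sk_3$; and both the NIZKP attached to $\rec_{1,3}$ and the Schnorr proof of $x_1$ are simulated, hence indistinguishable from honest proofs by the zero-knowledge property. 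A hybrid that swaps these four components one at a time---hiding of the commitment, VSS distribution, IND-CPA, zero-knowledge---shows the two views differ by a negligible quantity.

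The main obstacle is the rewind bookkeeping rather than any single reduction. I expect the delicate point to be justifying that the rewind to step~\ref{Edcommit} neither distorts the output distribution nor loops: this hinges on the ordering of commitments together with non-malleability, which guarantees that $\adv$'s committed randomness is independent of $\F$'s later re-crafted commitment, so that $\hat{\ab}$ forces $\ab=\ab_c$ on every accepting branch. Were the order reversed, $\adv$ could seed its randomness from $\F$'s commitment and make $\hat{\ab}$ land on the target only with probability $1/q$, giving exponential expected time---exactly the failure mode flagged in~\Cref{rush}. A secondary subtlety is invoking the simulated NIZKP and the inconsistent ciphertext in the correct order in the hybrid, so that the hiding of the ciphertext and the zero-knowledge of the proof are each used without letting $\adv$ exploit one to break the other.
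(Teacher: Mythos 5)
Your proposal is correct and follows essentially the same route as the paper's own proof: expected polynomial running time from the good-tape bound $\e/2$ on the adversary's decommitment (plus standard Schnorr extraction), and indistinguishability of the view from the identical distribution of the simulated Feldman-VSS (\Cref{feldman}), the simulatability of the Schnorr proof, and IND-CPA together with the zero-knowledge property for $\rec_{1,3}$ and its NIZKP. Your write-up is merely more explicit than the paper's---spelling out the hybrid argument, the uniformity of $\hat{\ab}$ (which the paper defers to \Cref{lemma2eddsa}), and the role of commitment ordering from \Cref{rush}---but the decomposition and the assumptions invoked are the same.
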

\begin{proof}
	Since $\adv$ is running on a good random tape we know that it will correctly decommit with probability at least $\frac{\e}{2} \ge \lambda(k)^{-t}$, then we need to rewind only a polynomial number of times.
	$\F$ does not know the discrete logarithm of $\hat{\ab}$ and so it needs to perform a ``fake'' Feldman-VSS.
	This is indistinguishable from a real Feldman-VSS since they have the same distribution, as shown in \Cref{feldman}.
	
	The Schnorr protocol can be perfectly simulated as shown in Appendix \ref{Schnorr} due to the Decisional Diffie-Hellman Assumption.
	
	The security properties of the encryption algorithm assure that the simulated $\rec_{1,3}$ is indistinguishable from the real one (thanks to IND-CPA for the ciphertext and, if necessary, the zero-knowledge property for the NIZKP).
	
	There are no other differences between the real protocol and the simulated one, so the simulation is indeed indistinguishable.
\end{proof}
\begin{lem} \label{lemma2eddsa}
	For a polynomially large fraction of inputs $\ab_c$ the simulation terminates with output $\ab_c$, except with negligible probability.
\end{lem}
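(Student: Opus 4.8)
The plan is to show that the set of target keys $\ab_c$ for which the forcing simulation succeeds is a polynomially large fraction of $\G$, and then to amplify the per-key success probability to $1-\mathrm{negl}$ by repeating the rewinding. First I would observe that, because $\adv$ broadcasts $\mathsf{KGC}_2$ before $\F$ (see \Cref{rush}), the binding property fixes the values $(\ab_2,\shb_{3,2},\rtb_2,\mb_2)$ that $\adv$ can open already before the rewind point of step~\ref{Edcommit}. Hence after the rewind $\F$'s choice of $\hat\ab$ (equivalently of $\shb_{3,1}$ together with the simulated VSS data) fully determines the output, and the identity $\ab=\hat\ab+\ab_2+2\shb_{3,1}-\shb_{3,2}$ combined with $\hat\ab=\ab_c-\ab_2-2\shb_{3,1}+\shb_{3,2}$ forces $\ab=\ab_c$ whenever $\adv$ does not abort.

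Next I would set up an averaging argument. In an honest execution $\F$ picks its contribution uniformly, so by binding and by the hiding and non-malleability of the commitment (which keep $\adv$'s contribution independent of $\F$'s) the resulting key $\ab$ is uniformly distributed over $\G$. For $v\in\G$ let $p(v)$ denote the probability, over $\F$'s randomness, that $\adv$ completes the key generation (passes the VSS checks and the Schnorr proof of \Cref{Schnorr}) conditioned on the resulting key being $v$. Since on a good tape $\adv$ reaches a forgery, and a fortiori completes the key generation, with probability at least $\e/2$, decomposing over the uniform output yields $\frac{1}{|\G|}\sum_{v}p(v)\ge\e/2$. A Markov-type bound then gives that the fraction of $v$ with $p(v)\ge\e/4$ is at least $\e/4$, which is inverse-polynomial since $\e\ge\lambda(\mathtt{k})^{-t}$; call these keys \emph{good}.

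For a good target $\ab_c$ I would invoke \Cref{lemma1Eddsa}: the forced simulation is indistinguishable from the real protocol, so the probability that one rewind-and-force attempt makes $\adv$ complete with output $\ab_c$ equals, up to a negligible term, $p(\ab_c)\ge\e/4$. Because after the rewind $\adv$'s state is frozen while $\F$ draws fresh randomness for $\shb_{3,1}$ and the simulated VSS, distinct attempts are independent and each succeeds with probability at least $\e/4-\mathrm{negl}$. Repeating $m=\mathtt{k}\cdot\lambda(\mathtt{k})^{t}$ times (polynomially many), the probability that every attempt fails is at most $(1-\e/4)^{m}\le\exp(-m\e/4)\le\exp(-\mathtt{k}/4)$, which is negligible. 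Hence for each good $\ab_c$ the simulation terminates with output $\ab_c$ except with negligible probability.

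The main obstacle I expect is justifying that the good keys form a polynomially large fraction rather than being adversarially avoidable, i.e.\ ruling out that $\adv$ selectively aborts on precisely those outputs $\F$ wants to hit. This is exactly where the three properties must be combined: binding freezes $\adv$'s contribution before the rewind so that $\F$'s post-rewind choice genuinely controls the output; hiding and non-malleability decorrelate $\adv$'s contribution from $\F$'s so that the output is truly uniform; and the indistinguishability of \Cref{lemma1Eddsa} equates a single forced attempt's success probability with the conditional completion probability $p(\ab_c)$. Once these are in place the averaging bound leaves $\adv$ no room to sabotage more than a negligible fraction of targets.
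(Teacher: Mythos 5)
Your proof is correct, and its second half takes a genuinely different (and stronger) route than the paper's. The first half coincides with the paper: the commitment ordering of \Cref{rush} plus binding/non-malleability freeze $\adv$'s contribution across the rewind, so the choice $\hat{\ab}=\ab_c-\ab_2-2\shb_{3,1}+\shb_{3,2}$ forces the output to equal $\ab_c$ whenever $\adv$ completes. For the second half the paper is much more direct: on a good tape $\adv$ decommits for at least an $\e/2$ fraction of the honest contributions it may be shown, and since a uniform $\ab_c$ induces a uniform $\hat{\ab}$, at least an $\e/2$ fraction of targets lead to termination; there is no averaging lemma and no repetition, and the ``except with negligible probability'' clause refers only to a terminating run producing a wrong output. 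You instead prove the literal, per-key reading of the statement: a Markov/averaging bound gives a $\ge\e/4$ fraction of keys whose conditional completion probability is $\ge\e/4$, and then polynomially many independent rewind-and-force attempts amplify the per-good-key success to $1-\mathrm{negl}$. Your version buys a cleaner treatment of the fact that termination on a fixed target is probabilistic over $\F$'s residual coins (which the paper blurs by speaking of a ``fraction of inputs''), at the cost of a worse constant ($\e/4$ versus $\e/2$, immaterial for \Cref{teoEdDSA}) and of one step that needs flagging: you invoke \Cref{lemma1Eddsa} per fixed good key, i.e.\ you need the forced view with target $v$ to be indistinguishable from the real view conditioned on the key equalling $v$. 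That does not follow formally from the aggregate statement of \Cref{lemma1Eddsa}, since the conditioning event has probability $1/|\G|$; it does, however, follow from the same ingredients (the Feldman simulation of \Cref{feldman} is identically distributed for every fixed target, and the encryption and ZKP components are indistinguishable for any fixed, efficiently generated target), so this is a wrinkle to make explicit rather than a gap. Finally, your amplification is sound precisely because you redraw $\shb_{3,1}$ (adjusting $\hat{\ab}$ accordingly) at each attempt, so that each attempt's success probability is the averaged quantity $p(\ab_c)$ rather than a quantity conditioned on one fixed $\shb_{3,1}$; you stated this, and it is the right move.
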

\begin{proof}
	First we prove that if the simulation terminates correctly (i.e. with output different from $\perp$) then it terminates with output $\ab_c$ except with negligible probability.
	
	This is a consequence of the non-malleability property of the commitment scheme.
	Indeed, if $\adv$ correctly decommits twice it must do so with the same string, no matter what $P_1$ decommits to (except with negligible probability).
	Therefore, due to our choice for $\hat{\ab}$ we have that the output is $\ab_c$.
	
	Now we prove that the simulation ends correctly for a polynomially large fractions of input.
	Since $\adv$ is running on a good random tape, it decommits correctly for at least $\frac{\e}{2}>\lambda{k}^{-t}$ inputs.
	Moreover, since $\ab_c$ is chosen uniformly at random and $\hat{\ab} = \ab_c - \ab_2 - 2 \shb_{3,1} + \shb_{3,2}$ is fully determined after the rewind, we have that $\hat{\ab}$ has also uniform distribution, then we can conclude that for at least a fraction $\frac{\e}{2}>\lambda{k}^{-t}$ of input the protocol will correctly terminate.
\end{proof}

Now we have to deal with the ordinary signature algorithm.
Here $\F$ can fully predict what $\adv$ will output and then it can choose its shards in order to match the signature it received from its oracle.
Comparing to the proofs of ECDSA threshold protocols in~\cite{gennaro2018fast,battagliola2020threshold} we do not need to make a distinction between semi-correct and non-semi-correct executions, since we can always provide a perfect simulation that ends with the desired result (except with negligible probability).

It is important to  remember that $\F$ does not know the secret key of $P_1$ but it knows everything about $P_2$, since it was able to extract the secret values during the ZKPs.
\\
\\
The simulation works as follows:
\begin{enumerate}
	\item $\adv$ chooses a message $M$ to sign.
	
	\item $\F$ queries its signing oracle for a signature for $M$ corresponding to the public key $\ab$, and gets $(\mathcal{R}_f,S_f)$.
	\label{RewindSignature}
	
	\item $P_i$ follows the protocol normally and computes $\rb = \rb_1+\rb_2$. We can notice that $\F$ is able to follow the protocol normally since it knows $r'_1$ and therefore can compute $\rb_1$.
	
	\item $\F$ computes $\hat{\rb} = \rb_f - \rb_2$ and rewinds the adversary at the end of Step~\ref{RewindSignature}.
	
	\item $\F$ sets $\rb_1 = \hat{\rb}$.
	
	\item $P_i$ follows the protocol normally to get $\rb = \rb_1+\rb_2$.
	
	\item $\F$ simulates the ZKP using as input $\rtb_1$ and $\hat{\rb}$.
	
	\item From the ZKP given by the adversary on behalf of $P_2$, $\F$ is able to extract $r'_2$ from the adversary, and therefore also $r_2$.
	
	\item Since $\F$ knows both $r_2$ and $\omega_2$, $P_1$ can compute $S_1$ as: $\linebreak{S_1 = S_f - r_2 - \omega_2 H(\rb\parallel \ab\parallel M)}$.
	
	\item $P_i$ follows the protocol normally to get $S=S_1+S_2$.
	
	\item $P_i$ checks that $ S\basePoint =  \mathcal{R}+H(\mathcal{R}\parallel \ab\parallel M)A$.
\end{enumerate}
If any check fails the protocol aborts, otherwise the output signature is $(\rb,S)$.

\begin{lem}
	If Purify is secure in the sense of \Cref{LemmaPurify}, then the protocol above is a perfect simulation of the centralized one and terminates correctly with output $(\mathcal{R}_f,S_f)$.
\end{lem}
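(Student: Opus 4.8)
The plan is to prove the two claims of the statement separately: first that the honest parties' joint output equals the oracle signature $(\mathcal{R}_f,S_f)$ (correctness), and then that the whole transcript seen by $\adv$ is distributed as in a genuine run of the protocol (perfect, i.e. computationally indistinguishable, simulation). Throughout, $\F$ impersonates $P_1$, knows $\omega_2$ and can extract $\adv$'s Purify seed, but does not know $\log_\basePoint\ab$.

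For correctness I would argue purely algebraically once the rewind at Step~\ref{RewindSignature} has been performed. After the rewind $\F$ fixes $\rb_1=\hat\rb=\rb_f-\rb_2$, so the recombined point is $\rb=\rb_1+\rb_2=\rb_f$. For the scalar part, $\adv$ as $P_2$ honestly broadcasts $S_2=r_2+\omega_2 H(\rb\parallel\ab\parallel M)$, while $\F$ sets $S_1=S_f-r_2-\omega_2 H(\rb\parallel\ab\parallel M)$, whence $S=S_1+S_2=S_f$. Hence $(\rb,S)=(\rb_f,S_f)$ and the final check passes because the oracle signature is valid for $\ab$. The one thing to justify here is that $\F$ can actually form $S_1$: it recovers $r'_2$ by extracting the witness of $\adv$'s NIZKP (soundness part of \Cref{LemmaPurify}), sets $r_2=f(r'_2\V')$, and already holds $\omega_2$ from the key-generation extraction; the reconstruction identity $\omega_1+\omega_2=a$ built into the Feldman sharing guarantees consistency.

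A prerequisite for the rewind is that $\rb_2$ is unchanged between the first run and the replay, and this is precisely where the deterministic nature of Purify enters. Since $\rb_2=f(r'_2\V')\basePoint$ with $\V'=H_{\mathtt{Pur}}(\mathtt{K},M)$ and $\mathtt{K}=H(X_A,X_B)$ all determined before the signing exchange and independent of $\rb_1$, the value $\adv$ must send in order to pass its own NIZKP is identical in both runs, so the offset $\hat\rb$ computed from the first run remains correct after rewinding.

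The delicate part is the indistinguishability claim, which I would handle message-by-message, the only deviations being (i) $\rb_1$ replaced by $\hat\rb$, (ii) a simulated NIZKP for the relation linking $\rtb_1$ and $\rb_1$, and (iii) the altered formula for $S_1$. Deviation (iii) is vacuous: substituting $S_f=r_f+aH(\rb\parallel\ab\parallel M)$ with $r_f=r_1+r_2$ (where $r_1=\log_\basePoint\hat\rb$) and $a=\omega_1+\omega_2$ gives $S_1=r_1+\omega_1 H(\rb\parallel\ab\parallel M)$, exactly the value an honest $P_1$ would send. Deviation (ii) is neutralized by the zero-knowledge property of \Cref{LemmaPurify}, so $\adv$ cannot exploit the fact that $\hat\rb$ need not be consistent with $\rtb_1$. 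The main obstacle is therefore (i): I must show that $\hat\rb$ is distributed as an honest $\rb_1$. In a real run $\rb_1=f(r'_1\V')\basePoint$, and by the pseudorandomness half of \Cref{LemmaPurify} the scalar $f(r'_1 H_{\mathtt{Pur}}(\cdot))$ is indistinguishable from uniform on $\Int_q$, so $\rb_1$ is indistinguishable from a uniform group element even given $\rtb_1$. Symmetrically, $\hat\rb=\rb_f-\rb_2$ is built from the oracle nonce $r_f$, which the good-PRNG hypothesis (\Cref{good-PRNG}) makes uniform and independent of $\adv$'s view, and subtracting the fixed $\rb_2$ preserves uniformity. The two distributions of $\rb_1$ thus coincide up to a negligible term, and a standard hybrid over the polynomially many signing queries lifts this per-message statement to indistinguishability of the whole simulation, completing the argument.
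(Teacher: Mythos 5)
Your proposal is correct and follows essentially the same route as the paper's proof: both arguments reduce the simulation's soundness to the algebraic identity $S_1=S_f-r_2-\omega_2H(\rb\parallel\ab\parallel M)$ (possible because $\F$ knows $\adv$'s secrets ahead of time) and reduce indistinguishability to the fact that the honest $\rb_1$ (by Purify's pseudorandomness) and the simulated $\hat\rb=\rb_f-\rb_2$ (by the uniformity of the oracle nonce) are both indistinguishable from uniform over $\langle\basePoint\rangle$. Your additional observations --- the NIZKP-soundness argument that $\rb_2$ is stable under rewinding, the explicit algebra showing the simulated $S_1$ equals the honest one, and the hybrid over signing queries --- are refinements the paper leaves implicit, not a different approach.
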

\begin{proof}
	The differences between the simulation and the real protocol is that $\F$ does not know the secret key $\omega_1$ when computing $S_1$, and the computation of $\rb_1$ uses a different PRF.
	
	The lack of knowledge of the secret key is not a problem since $\F$ is able to retrieve the correct values to output knowing ahead of time what $\adv$ should output.
	
	About the different method used to compute $\rb_1$, notice that the assumptions on Purify mean that in a real execution of the protocol $r_i$ has a distribution that is indistinguishable from the uniform distribution over $\Int_q$, i.e. the distribution of $\rb_i$ .
	In the simulation $\rb_1 = \rb_f - \rb_2$, where the distribution of $\rb_2$ is indistinguishable from the uniform distribution over the group generated by $\basePoint$ (as in the real protocol) and $\rb_f$ comes from the centralized EdDSA oracle.
	From our assumptions on the hash function used in EdDSA, the distribution of $\rb_f$ is also indistinguishable from the uniform distribution over the group generated by $\basePoint$, consequently so is the distribution of $\rb_1$.
	
	It is straightforward that if the protocol terminates it will do so with output $(\rb,S)=(\rb_f,S_f)$, in fact if $\adv$ does not act honestly the check in the last step will fail with high probability.
	
	Finally the only other way that the protocol does not terminate is when the NIZKP of $\rb_2$ does not verify.
	In this case the simulation simply aborts, like in a real execution of the protocol.
\end{proof}

Now we have to deal with the recovery signature.
Since the core algorithm remains the same we can use the proof just explained, we only need to change the setup phase during which the third player recovers its secret material.

First of all we will examine what happens if $\adv$ controls one between $P_1$ or $P_2$ and $\F$ controls $P_3$. Then we will deal with the case in which $\adv$ controls $P_3$, that will be easier since the whole enrollment phase can be avoided.

Trivially if $\adv$ asks for a recovery signature between the two honest parties $\F$ can simply ask its oracle and output whatever it received from the oracle. So we can limit ourselves to deal with the case where $\adv$ participates in the signing process.

\noindent
If $\adv$ controls $P_2$ the simulation proceeds as follow:
\begin{enumerate}
	\item $P_2$ sends to $P_3$ $\ab, X_2, \rec_{1,3}, \rec_{2,3}$.
	
	\item $\F$ has participated in the Key-Generation  phase, so knows $\shb_{3,1}$ and  $\shb_{3,2}$, so can compute on behalf of $P_3$ the value $\ab_3 = 2\shb_{3,1} -  \shb_{3,2}$.
	
	\item $P_3$ picks randomly $r'_3 \in \Int_{q'}$ and computes $\rtb_3=r'_3\basePoint'$.
	
	\item $P_3$ sends $X_3 = (\ab_3, \rtb_3)$ to $P_1$.
	
	\item Note that $P_3$ can not decrypt the values received in the first step, so it simulates the ZKP about $x_3$, conversely $\F$ can extract $x_2$ from $P_2$.
	
	\item $P_2$ computes $\tilde{\omega}_2 = -3 \omega_2$.
	$P_3$ can not compute its secret key $\omega_3$, but this is not a problem as we explained before.
	
	\item They perform the signing algorithm with the above simulation.
    Also in this case $\F$ does not know its own secret key, but we remark that this is fine since it knows $P_2$'s secrets and it can use the signing oracle.
\end{enumerate}
In the same way we can deal with the case of $\adv$ controlling $P_1$.

\vspace{.5cm}

\noindent
Now we have to deal with the last case, i.e. when $P_3$ is the dishonest party.

\vspace{.5cm}

During the enrollment phase $\F$ can produce random shards, which will be sent to $P_3$ during the recovery signature phase, and output the public key given by the EdDSA challenger.
These random shards simulate correctly the protocol for the properties of the secret sharing.
In fact the only difference is that once again $\F$ does not know the corresponding secret keys of one between $P_1$ and $P_2$ (one player's keys can be chosen freely, but the others are forced by the challenge public keys), but as before this is not a problem because, thanks to the oracle and the secrets it extracts from $P_3$, $\F$ can simulate signatures with the same simulation described above.

\vspace{.5cm}

Now we are ready to prove~\Cref{teoEdDSA}.
\begin{proof}
	As we previously proved, our simulator produces a view of the protocol indistinguishable from the real one for the adversary, so $\adv$ will produce a forgery with the same probability as in a real execution.
	Then the probability of success of our forger $\F$ is at least $\frac{\e^3}{8}$, since $\F$ has to succeed in
	\begin{itemize}
		\item choosing a good random tape for $\adv$, whose probability is at least $\frac{\e}{2}$, as shown in~\Cref{goodTape},
		\item hitting a good public key, whose probability also is at least $\frac{\e}{2}$ as shown in~\Cref{lemma1Eddsa} and~\Cref{lemma2eddsa}.
	\end{itemize}
	Under those conditions $\adv$ successfully produces a forgery with probability at least $\frac{\e}{2}$ as per \Cref{defbuono}.
	Under the security of the EdDSA signature scheme, the probability of success of $\F$ must be negligible, which implies that $\e$ must negligible too, contradicting the hypothesis that $\adv$ has non-negligible probability of forging the scheme.
\end{proof}

\begin{oss}
    As we said in the Signature Algorithm description, at point \ref{SignaturePrimoPunto}\ref{SignatureDoveSiCalcolaK}, we need to have a different $\mathtt{K}$ for each pair of signers, otherwise an adversary having access to all the messages exchanged by the honest parties could steal the secret key.
    It suffices to ask for the signature of the same message, first the signature is performed by the honest parties, then by the adversary and a honest party, as explained in Section 4 of \cite{Purify}.
\end{oss}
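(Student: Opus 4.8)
The plan is to establish this observation by exhibiting the explicit key-recovery attack it alludes to, thereby showing that binding $\mathtt{K}$ to the (ordered) pair of signers is not cosmetic but a genuine security requirement. The crux is that the per-player nonce contribution computed at step~\ref{SignaturePrimoPunto} is a \emph{deterministic} function of only $\mathtt{K}$, $M$, and the player's long-term seed $r'_i$: indeed $\V'=H_{\mathtt{Pur}}(\mathtt{K},M)$ and $r_i=f(r'_i\V')$. Hence, if $\mathtt{K}$ did not depend on the signer pair (i.e. were computed as at step~\ref{SignatureDoveSiCalcolaK} but without the $X_A,X_B$ dependence), a fixed honest player would recompute the \emph{identical} nonce $r_i$ every time it signs the same message $M$, no matter which co-signer it is paired with. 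Suppose $\adv$ controls $P_2$, so the honest players are $P_1,P_3$; recall that, since the combined sharing polynomial has degree one, Lagrange interpolation at $0$ gives $a=2x_1-x_2=\omega_1+\omega_2$ (using $\omega_1=2x_1$, $\omega_2=-x_2$).

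Under the stated premise that $\adv$ sees every message exchanged between honest parties, the attack runs as follows. First $\adv$ requests a recovery signature on $M$ produced by the two honest parties $P_1,P_3$; eavesdropping on the messages of the signature algorithm it learns $\rb_1$, the aggregate $\rb^{(1)}=\rb_1+\rb_3$, the challenge $e^{(1)}=H(\rb^{(1)}\parallel\ab\parallel M)$, and the partial value $S_1^{(1)}=r_1+\tfrac{3}{4}\omega_1\,e^{(1)}$, where the factor arises from the recovery rescaling $\tilde\omega_1=\tfrac{3}{4}\omega_1$. Then $\adv$ itself co-signs an \emph{ordinary} signature on the \emph{same} $M$ together with $P_1$; since $\mathtt{K}$ (by hypothesis) and $M$ are unchanged, $P_1$ reuses the very same nonce $r_1$, and $\adv$ obtains $S_1^{(2)}=r_1+\omega_1\,e^{(2)}$ with $e^{(2)}=H(\rb^{(2)}\parallel\ab\parallel M)$. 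These are two linear equations in the unknowns $r_1,\omega_1$; eliminating $r_1$ gives
\begin{equation*}
\omega_1=\frac{S_1^{(1)}-S_1^{(2)}}{\tfrac{3}{4}e^{(1)}-e^{(2)}}.
\end{equation*}
Combining the recovered $\omega_1$ with the share $\omega_2$ that $\adv$ already holds yields $a=\omega_1+\omega_2$, i.e. the full centralized EdDSA secret, after which $\adv$ forges at will.

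It remains to verify that the pair-dependent choice $\mathtt{K}=H(X_A,X_B)$ defeats precisely this attack, and to isolate the delicate point. With the correct $\mathtt{K}$ the two sessions use $\mathtt{K}^{(1)}=H(X_1,X_3)\ne\mathtt{K}^{(2)}=H(X_1,X_2)$; by the random-oracle indistinguishability of $H_{\mathtt{Pur}}$ and the pseudo-randomness of $f(u\,\cdot)$ from \Cref{LemmaPurify}, the resulting nonces $r_1^{(1)}$ and $r_1^{(2)}$ are independent, so the two relations now involve three unknowns $r_1^{(1)},r_1^{(2)},\omega_1$ and no secret leaks. The main obstacle in the attack direction is establishing that the $2\times2$ system is genuinely invertible, i.e. that the denominator $\tfrac{3}{4}e^{(1)}-e^{(2)}$ is nonzero except with negligible probability; this requires carefully tracking the per-pair rescaling factors ($\tfrac{3}{4}$ versus $1$ here) and arguing that the aggregates $\rb^{(1)},\rb^{(2)}$, hence the challenges $e^{(1)},e^{(2)}$, differ with overwhelming probability. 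Notably, the \emph{differing} scaling factors keep the determinant nonzero even if a malicious co-signer forces $\rb^{(1)}=\rb^{(2)}$, so the attack is robust against such evasion; this parallels the nonce-reuse analysis in Section~4 of \cite{Purify}.
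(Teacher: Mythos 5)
Your proposal is correct and takes essentially the same approach as the paper, which justifies the observation by exactly this attack scenario---the same message signed first by the two honest parties, then by the adversary together with an honest one---while deferring the details to Section 4 of the Purify paper. You simply make explicit the nonce-reuse linear system (including the $\tfrac{3}{4}$ recovery rescaling of $\omega_1$ and the interpolation identity $a=\omega_1+\omega_2$) that the paper leaves to the cited reference, and your algebra checks out.
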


\subsection{Resilience of the recovery}\label{SottosezioneResilienzaRecovery}
In our security analysis we focused on the unforgeability of the signature, however with an offline party another security aspect is worthy of consideration: the resiliency of recovery in the presence of a malicious adversary. Of course if the offline party is malicious and unwilling to cooperate there is nothing we can do about it, however the security can be strengthened if we consider that one of the online parties may corrupt the recovery material. In this case a generic CPA asymmetric encryption scheme is not sufficient to prevent malicious behaviour, because we need a verifiable encryption scheme that allows the parties to prove that the recovery material is consistent, just like they prove that they computed the shards correctly.

In particular we need an encryption scheme that support DLOG verification as explained in point \ref{FeldmanKeyGeneration}\ref{ZKPDiscreteLogarithmRec} of the Key-Generation algorithm.
A suitable candidate is a variant of the  Cramer–Shoup cryptosystem presented in \cite{VerifiableEncryption}.
This algorithm equipped with a ZKP that allow the sender to prove that the plaintext he encrypted is the discrete logarithm of a public value.
In particular, since the protocol is a three step ZKP with special soundness, completness and honest-verifier zero knowledge it is possible to build a non-interactive ZKP using the Fiat-Shamir heuristic.

\section{Conclusions}\label{conclusions} 
Although decentralized signature algorithms have been known for a while, we are aware of only few proposals for algorithms that are able to produce signatures indistinguishable from a standard one.
The protocol described in this work is, as far as we know, the first example of threshold multi-signature allowing the presence of an off-line participant during key-generation and whose signatures are indistinguishable from EdDSA ones. 

The approach we have taken is very similar to the one presented in \cite{ECDSA2020threshold} and ~\cite{gennaro2018fast}, although there are some key differences between the works. 
First of all our main idea is to have two active participants to simulate the action of the third one.
This step is possible due to the uniqueness property of polynomial interpolation that gives a bijection between points and coefficients, which allows us to ``invert'' the generation of the shares, thanks to the preserved uniform distribution  in $\Int_p$. These shares are later recovered by the offline party exploiting an asymmetric encryption scheme.
A second difference is that we have managed to avoid equivocable commitments, under the assumption that in some specific steps (see~\Cref{rush}) we can consider the adversary not to be rushing.

The focus of this work was to shift away from DSA-like protocols and study a more recent standard like EdDSA.
We remark that ECDSA is more suited to be used in a multi-party environment: the absence of hash functions to be computed on private data allows a more straightforward adaption to a multi-party setting.
Indeed, a joint computation of a standard hash function is difficult in a reasonable time.
Therefore, when creating an EdDSA-compatible threshold multi-signature scheme there is the necessity of working around this issue. Our solution is to build a variant of the EdDSA protocol, whose outputs are indistinguishable from those of the original scheme (and therefore it preserves the security properties), thus avoiding joint hash computations.

On the other hand, multi-party EdDSA requires less message exchanges between the participants than the amount required for the threshold ECDSA protocol in \cite{ECDSA2020threshold}, since less checks are needed to avoid malicious computations.

A last remark worth to be mentioned is in the work-around made on the Zero-Knowledge proofs of our EdDSA scheme (as explained in \cite{ristretto}), which are required to work around the usage of elliptic curves whose group of points does not have prime order.

Similarly to its ECDSA counterpart, in order to guarantee the security of the signature itself against black-box adversaries, the protocol involves a large utilization of ZKPs.
Despite of the consequent drawbacks in terms of efficiency, our protocols has been succesfully implemented and adopted in the management of Libra wallets~\cite{libra_2020}.

Other future research steps involve the generalization to $(t,n)$-threshold schemes with more than one offline party and the extension of our notion of security.
Although our protocol is susceptible to DOS attacks on the offline party, there are many ways to overcome this apparent weakness, such as the distribution of the role of the Recovery party to multiple servers or the generalization of our scheme to more than three parties.

\section*{Acknowledgments}
The core of this work is contained in the first author's MSC thesis supervised by the second and fourth author.
\\
The authors would like to thank Conio s.r.l. and its co-CEO Vincenzo di Nicola for their support. We also thank Gaetano Russo, Federico Mazzone, and Zsolt Levente Kucsv\'an that worked on the implementation and provided valuable feedback.


\section*{Declarations}
The authors have no relevant financial or non-financial interests to disclose.
The authors have no conflicts of interest to declare that are relevant to the content of this article.
The second and the third authors are members of the INdAM Research group GNSAGA.
The first author acknowledges support from TIM S.p.A. through the PhD scholarship.
The authors have no financial or proprietary interests in any material discussed in this article.

\bibliography{sn-bibliography}

\begin{appendices}
\section{Zero Knowledge Proofs}

\subsection{Schnorr Protocol}\label{Schnorr}
The Schnorr Protocol is a zero-knowledge proof for the discrete logarithm.
\\Let $\G$  be a group of prime order $q$ with generator $\basePoint$.
Let $\kb\in \G$ be a random element in $\G$.
The prover $\mathscr{P}$ wants to prove to a verifier $\mathscr{V}$ that it knows the discrete logarithm of $\kb$, i.e. it knows $x\in \Int_q$ such that $x\basePoint=\kb$.\\
So the common inputs are $\G,\basePoint$ and $\kb$, while the secret input of $\mathscr{P}$ is $x$.\\
The protocol works as follows:
\begin{enumerate}
	\item $\mathscr{P}$ picks $r\in\Int_q$ uniformly at random in and computes $\ub=r\basePoint$.
	Then $\mathscr{P}$ sends $\ub$ to $\mathscr{V}$.
	\item $\mathscr{V}$ picks $c\in \Int_q$ uniformly at random and sends it to $\mathscr{P}$.
	\item $\mathscr{P}$ computes $z=r+cx$ and sends $z$ to $\mathscr{V}$.
	\item $\mathscr{V}$ computes $z\basePoint$.
	If $\mathscr{P}$ really knows $x$ it holds that $z\basePoint=\ub + c\kb$. If the equality does not hold, the verifier rejects.
\end{enumerate}
A detailed proof about the security of the algorithm can be found in \cite{schnorr1989efficient}.

\subsubsection{Schnorr Protocol Simulation}
We need to simulate the Schnorr protocol in two different ways: first we need to use it to extract the adversary's secret value, then we need to simulate it without knowing our secret value, tricking the opponent.

We can use the Schnorr protocol to extract the value $x$ from the adversary in this way:
    \begin{enumerate}
        \item Follow the standard protocol until the third point, obtaining $z$.
        \item Rewind the adversary to the second point and pick $\tilde{c}\ne c$.
        \item Follow the remaining part of the protocol, obtaining $\tilde{z}$.
        \item We can compute $\frac{z-\tilde{z}}{c-\tilde{c}}=\frac{(c-\tilde{c})x}{c-\tilde{c}}=x$.
    \end{enumerate}
    \begin{proof}[Sketch]
    Since the only extra hypothesis for $\tilde{c}$ is that $\tilde{c}\ne c$ we can suppose that $\tilde{c}$ has uniform distribution as well. Moreover $z$, once the verifier sent $c$ the value of $z$ is fixed, so the rewinding technique does not cause any problem.
    \end{proof}
    
    At the same time we need to be able to simulate the protocol without knowing~$x$.
    The simulation works as follows:
    \begin{enumerate}
        \item Follow the protocol until the second point, obtaining $c$.
        \item Rewind the adversary to the first point.
        The simulator picks $r$ randomly and computes $\tilde{\ub}=(-xc+r)\basePoint= -c(x\basePoint)+r\basePoint$.
        Under the discrete logarithm assumption and since $r,c$ are random element, this is indistinguishable from~$r\basePoint$.
        \item The simulator sends $\tilde{\ub}$ and the adversary sends $c$ again.
        \item The simulator sends $z=r-cx+cx = r$.
        \item The adversary checks that $z\basePoint=r\basePoint=\tilde{\ub}+c(x\basePoint)=-x c\basePoint+r\basePoint+ xc\basePoint.$
    \end{enumerate}
    \begin{proof}[Sketch]
    The tricky point of the simulation is the third point, when we need that the adversary sends the same $c$ it has previously sent, since sending a different $r$ could change the random choice of $c$ . This could be achieved introducing an equivocable commitment scheme, in this way we need only to change the decommitment value after receiving the adversary commitment.
    \end{proof}

\subsubsection{Equality of Discrete Logarithms}
This simple variant of the protocol allows to prove that two public elements are linked to the same secret value.

More formally, let $\G$ be a cyclic group of prime order $q$, let $\basePoint, \bar{\basePoint}$ be generators of $\G$, and finally let $\kb, \bar{\kb} \in \G$, $x \in \Int_q$.
The prover knows $x$ and wants to convince the verifier that:
\begin{equation}\label{zkpdlogeq}
    x\basePoint = \kb \quad \text{and} \quad x\bar{\basePoint} = \bar{\kb},
\end{equation}
without disclosing $x$. The values of $\basePoint$, $\kb$, $\bar{\basePoint}$ and $\bar{\kb}$ are publicly known.
\\
\\
The protocol proceeds as follows:
\begin{enumerate}
    \item The prover generates a random $r$ and computes  $\ub = r\basePoint \quad \text{and} \quad \bar{\ub} = r\bar{\basePoint}$, then sends $(\ub, \bar{\ub})$ to the verifier.
    
    \item The verifier computes a random $c\in \{0, 1\}$ and sends it to the prover.
    
    \item The prover creates a response $s = r + c \cdot x$ and sends $s$ to the verifier.
    
    \item The verifier checks that $s\basePoint = c \kb + \ub$, $s\bar{\basePoint} = c\bar{\kb} + \bar{\ub}$.
    If the check fails the proof fails and the protocols aborts.
    
    \item The previous steps are repeated $\tau = \mathrm{poly}(\log_2(q))$ times, i.e. the number of repetitions is polynomial in the length of $q$ (the security parameter).
\end{enumerate}
A detailed analysis of the protocol and its security can be found in~\cite{spadafora2020coercion}.
\end{appendices}
\end{document}